\documentclass{article}
\usepackage{amsfonts,color}
\usepackage{amsmath,mathabx}
\usepackage{caption}
\usepackage{comment}
\usepackage{enumitem}
\usepackage{lscape}
\usepackage{subcaption}
\usepackage{graphicx}
\usepackage{hyperref}
\usepackage{cleveref}
\usepackage{setspace} 
\usepackage{tikz}
\usepackage{tabularx}
\usepackage{adjustbox}
\usepackage{marginnote}
\usepackage{tablefootnote} 
\usepackage{url}
\usetikzlibrary{patterns}
\frenchspacing

\newtheorem{theorem}{Theorem}

\newtheorem{corollary}[theorem]{Corollary}

\newtheorem{definition}[theorem]{Definition}
\newtheorem{example}[theorem]{Example}

\newtheorem{lemma}[theorem]{Lemma}

\newtheorem{proposition}[theorem]{Proposition}

\newtheorem{remark}[theorem]{Remark}

\newenvironment{proof}[1][Proof]{\noindent\textbf{#1.} }{\ \rule{0.5em}{0.5em}}

\newcommand{\R}{\mathbb{R}}
\newcommand{\E}{\operatorname{\mathbb{E}}}

\newcommand{\F}{\mathcal{F}}
\newcommand{\M}{\mathcal{M}}
\newcommand{\N}{\mathcal{N}}

\newcommand{\trho}{\tilde{\rho}}

\DeclareMathOperator*{\esssup}{ess\,sup}

\newcommand{\keywords}[1]{\textbf{Keywords } #1}

\newcommand{\sqb}[1]{\ensuremath{ \left[ #1 \right] }}
\newcommand{\of}[1]{\ensuremath{\left( #1 \right)}}

\title{\vskip -1.4cm Generalized Orlicz premia\thanks{This paper was circulated earlier under the title ``Geometrically convex return risk measures and Orlicz premia''.
This research was funded in part by the Netherlands Organization for Scientific Research under an NWO-Vici grant 2020--2027 (Ayg\"un and Laeven).}\\[0mm] 
 }
\author{M\"ucahit Ayg\"un \\
{\footnotesize Dept.~of Quantitative Economics}\\
{\footnotesize University of Amsterdam}\\
{\footnotesize and Tinbergen Institute}\\
{\footnotesize \texttt{M.Aygun@uva.nl}}\\
\and Fabio Bellini \\
{\footnotesize Dept.~of Statistics and Quantitative Methods}\\
{\footnotesize University of Milano-Bicocca}\\
{\footnotesize \texttt{fabio.bellini@unimib.it}}\\
\and Roger J.~A.~Laeven\footnote{Corresponding author.}\\
{\footnotesize Dept.~of Quantitative Economics}\\
{\footnotesize University of Amsterdam, CentER}\\
{\footnotesize and EURANDOM}\\
{\footnotesize \texttt{R.J.A.Laeven@uva.nl}}
}

\date{\today}
\begin{document}
	
\maketitle

\begin{abstract}
We introduce a generalized class of Orlicz premia based on possibly nonconvex 
loss functions, extending the classical framework of Haezendonck and Goovaerts 
(1982).\ Without the usual convexity requirement on the loss function $\Phi$, the Orlicz framework 
naturally encompasses quantiles, expectiles and $L^p$-quantiles while preserving 
the fundamental properties of Orlicz premia. 
We show that within this framework cash-additivity axiomatizes 
$L^p$-quantiles, generalizing the classical `collapse-to-the-mean' result for cash-additive convex Orlicz premia into a `collapse-to-$L^p$-quantiles' result, with expectiles as a special case. 
We focus on two natural classes of nonconvex loss functions: 
\emph{concave-convex} Orlicz functions, which mimic the idea of 
S-shaped value functions in prospect theory, and \emph{GA-convex} Orlicz functions, which can be described in terms of comparative convexity with respect to a logarithmic reference, and for which the 
corresponding Orlicz premium is geometrically convex. 
Finally, we show that a suitable subclass of generalized Orlicz premia coincides with the class of law-invariant, monotone, positive, positively homogeneous, normalized functionals that are weakly lower semicontinuous, continuous from above, and whose level 
sets are convex with respect to mixtures (the so-called CxLS property). 
\end{abstract}
\medskip 

\keywords{Orlicz premia, expectiles, $L^p$-quantiles, convexity, GA-convexity, CxLS property}.

\section{Introduction}
A rich literature in actuarial science and financial mathematics has studied the theory of measures of risk, both in static and dynamic environments.
Risk measures are used for several different purposes including the calculation of insurance premia. 
Among premium calculation principles, insurance premia based on Orlicz (Luxembourg) norms---Orlicz premia---play a prominent role.\smallskip

Orlicz premia were introduced in the actuarial literature by Haezendonck and Goovaerts in \cite{HG82}, the very first issue of \textit{Insurance:\ Mathematics and Economics}.\ 
They constitute a class of extensively studied premium principles, defined by
$$
H_\Phi(X) := \inf \{k > 0 | \E \left [\Phi(X/k ) \right] \leq 1 \},
$$
where $X$ is a random variable representing the loss and the so-called \emph{Young} function $\Phi \colon [0, +\infty) \to [0, +\infty)$
is increasing, convex and satisfies $\Phi(0)=0$ and $\Phi(1)=1$. 
Orlicz premia represent multiplicative versions of the \emph{zero utility} premium, since it often (i.e., under regularity conditions) holds that 
$$
\E \left [\Phi \left(\frac{X}{H_\Phi(X)} \right) \right] =1,
$$
whereas, by comparison, the indifference equation defining the zero utility premium $\Pi(X)$ for a utility function of wealth $u$ is given by
$$
\E \left[u \left(X - \Pi(X) \right) \right]=0. 
$$

Orlicz premia are positively homogeneous, i.e., $H_\Phi(\lambda X)=\lambda H_\Phi(X)$, for $\lambda\geq 0$, and satisfy many appealing mathematical properties (\cite{HG82,GDH84,RR91}), since they are a generalization of $L^p$-norms that arise in the special case $\Phi(x)=x^p$, i.e., power/CRRA (dis)utility, with a rich duality theory that has been thoroughly studied also in the financial mathematics and optimization literature (see, e.g., \cite{BF08,CL08,CL09,LS14}). 
\smallskip

Orlicz premia are also the basis for the definition of Haezendonck-Goovaerts risk measures, introduced in \cite{GKDT04} by a construction resembling the optimized certainty equivalent of Ben-Tal and Teboulle (\cite{BT86, BT07}), and successively studied in \cite{BRG08, BRG12} and extended to a robust and dynamic  setting in \cite{BLR18, BLR21}.  
\smallskip

Our starting point is the observation that relaxing the convexity requirement on the loss function $\Phi$ yields a generalized notion of Orlicz premium that includes well-known possibly nonconvex functionals such as quantiles, expectiles, $L^p$-quantiles (\cite{C96}), and other forms of generalized quantiles as in \cite{BKMR14}. 
\smallskip

Besides encompassing these relevant functionals, the idea of relaxing the convexity assumption on $\Phi$ is motivated by classical papers in utility theory such as \cite{FS48} and \cite{KT79}, advocating the use of concave/convex utility functions of wealth or S-shaped value functions of gains/losses with respect to a possibly exogenous reference point, to model the simultaneous presence of risk aversion and risk-loving behavior.  
Furthermore, we mention that the weakening of the convexity assumption on $\Phi$ has also been pursued in the recent mathematics literature, in the context of spaces and topologies, as we discuss in Section~\ref{sec:Orl}. 
Another related development is the analysis of state-dependent Orlicz premia as in the recent \cite{S25}. 
\smallskip

Relaxing the definition of the Young function has a few consequences, which we discuss in Section~\ref{sec:Orl}. 
Most importantly, we show in Proposition~\ref{prop:orl} that the basic properties of Orlicz premia still hold in the general case, with the exception of convexity that holds if and only if $\Phi$ is convex as in the classical case. 
\smallskip

Our first main result is Theorem~\ref{th:expectiles}, showing that a cash-additive generalized Orlicz premium is necessarily an $L^{p+1}$-quantile, in particular with $p=1$ (i.e., corresponding to an expectile) if additionally convexity or concavity is assumed. 
This result gives a novel point of view on $L^p$-quantiles and expectiles, and further motivates their actuarial applications. 
Further, Theorem~\ref{th:expectiles} extends the classical result about `collapse to the mean' of cash-additive, convex Orlicz premia that was obtained in \cite{HG82,GDH84} under more restrictive assumptions on the Young function $\Phi$ to a more general `collapse to $L^p$-quantiles' or expectiles.
\smallskip

We then focus on two specific general families of loss functions. 
The first is constituted by \emph{concave-convex} loss functions, which naturally extend the cases of $L^p$-quantiles and $L^{p,q}$-quantiles, that have the form 

$$
\Phi(x)= 1+ \alpha \Phi_1 \left ((x-1)_+ \right) - (1-\alpha) \Phi_2 \left ((x-1)_- \right),
$$
where $\Phi_1$ and $\Phi_2$ are normalized convex Young functions. 
For this class, we derive a few properties of the corresponding generalized Orlicz premia, and we study their consistency with respect to stochastic orders, extending the result on the consistency of $L^p$-quantiles with respect to $p$-convex orders found in \cite{B12}. \smallskip

The second class is constituted by GA-convex Orlicz functions, a notion arising in the algebraic theory of convexity based on the comparison of the Geometric and the Arithmetic mean. 
Our interpretation is in terms of comparative convexity, since these Orlicz functions satisfy the inequality

$$
\frac{\Phi''(x)}{\Phi'(x)} \geq -\frac{1}{x},
$$
and are thus more convex than the logarithmic (dis)utility function. 
Recall that comparative convexity can arise also between nonconvex functions, and that convex functions are by definition more convex than linear functions. 
In Proposition~\ref{prop:gg-GA}, we show that in this case the corresponding Orlicz premia satisfy another form of algebraic convexity called \emph{geometric convexity} or GG-convexity for short. 
\smallskip

Finally, in Section~\ref{sec:CxLS} we present the central result of the paper, that is, Theorem~\ref{th:axiom}, in which we give an axiomatic foundation to generalized Orlicz premia, showing that a suitable subclass of them corresponds precisely to those functionals satisfying law-invariance, monotonicity, positive homogeneity, positivity, normalization, weak lower semicontinuity, continuity from above and convex level sets with respect to mixtures, the so-called \emph{CxLS} property. 
This is relevant because the CxLS property is a necessary condition for elicitability, as has been extensively discussed in the literature (see, e.g.,\ \cite{O85,G11,DBBZ14,Z16,FZ16}).

The paper is organized as follows.\ In Section~\ref{sec:Orl} we introduce generalized Orlicz premia and discuss their properties. 
In Section~\ref{sec:gg-conv} we focus on concave-convex and GA-convex Orlicz functions.  
In Section~\ref{sec:CxLS} we provide the central axiomatization result. Section~\ref{sec:con} concludes.  
All proofs are in the Appendix.

\section{A generalization of Orlicz premia}\label{sec:Orl}

In order to introduce the definition of generalized Orlicz premia, we start by recalling from the mathematics literature the notions of Young function, Luxemburg norm and Orlicz space, following the classical references \cite{RR91} and \cite{ES92}. \\  
We consider random variables  
defined on a common nonatomic probability space $(\Omega, \F, P)$.\ Equalities and inequalities between random variables are meant to hold $P$-a.s.\ without further mentioning.
\begin{definition}
\label{def:orl-old}
Let $\Phi \colon [0,+\infty) \to [0, +\infty]$ satisfy the following properties:
\begin{enumerate}[label=\alph*)]
\item  $\Phi(0)=0$, $\Phi(1)=1$ 
\item  $\Phi(+\infty)=+\infty$
\item $\Phi$ is nondecreasing
\item  $\Phi$ is left-continuous
\item  $\Phi$ is convex. 
\end{enumerate}
Then $\Phi$ is called a \emph{normalized Young function} 
and 
\begin{equation}
\label{eq:orl-space}
    L^{\Phi}:= \{ X \in L^0(\Omega, \F, P) \mid \E[\Phi(\vert X \vert /k)] \leq 1 \text{ for some } k>0 \} 
\end{equation}
is the associated Orlicz space. For 
$X \in L^\Phi$, the \emph{Orlicz premium} is 
\begin{equation}
\label{eq:orl-prem}
H_\Phi(X):= \inf \{k >0 \mid \E[\Phi(\vert X \vert /k)] \leq 1 \}. 
\end{equation}
\end{definition}

There are actually minor differences between the definitions of Young function across various references in the mathematics literature (e.g., \cite{RR91} and \cite{ES92}) and the actuarial literature (e.g., \cite{HG82}). Definition \ref{def:orl-old} above subsumes elements of both and covers all relevant cases. We make a few comments useful for comparison with the more general version that we are going to introduce. 
\smallskip

First, allowing $\Phi$ to take the value $+\infty$ is customary in the mathematics literature, and is motivated by the inclusion of the case $H_\Phi(X)=\esssup(X)$. 
Second,
the normalization $\Phi(1)=1$ is often omitted in the mathematics literature, but is sensible from the actuarial point of view since it guarantees the constancy property $H_\Phi(c)=c$, for every $c \in [0, +\infty)$.\  
Third, condition e) implies that $\Phi$ is continuous wherever it is finite, so only if $\Phi(c)=+\infty$ for some $c \in [0, +\infty)$ condition d) is not trivially satisfied. 
Finally, in the actuarial literature, premium principles are computed for nonnegative variables representing losses, so in equations \eqref{eq:orl-space} and \eqref{eq:orl-prem} the absolute values may be removed.
\smallskip

We begin by illustrating that the notion of a Young function can be significantly extended, obtaining as corresponding Orlicz premia
functionals that are well-known in actuarial applications and in the theory of risk measures.

\begin{example}[Quantiles] \label{ex:quantiles}
Let $X \geq 0$ and let
\[
\Phi_\alpha (x) := 
\begin{cases}
\alpha &\text { if } 0 \leq x \leq 1,\\
1+\alpha &\text { if } x > 1 ,
\end{cases}
\]
with $0 < \alpha \leq 1$. Then, 
\begin{align*}
H_{\Phi_\alpha}(X) &= \inf \left \{ k >0 \;  | \; \E \left [  \Phi_\alpha \left (X/k \right)  \right ] \leq 1 \right \} \\
&= \inf \left \{ k >0 \;  | \; \alpha P (X \leq k) + (1+\alpha) (1- P(X \leq k)) \leq 1 \right \} \\
&=  \inf \left \{ k >0 \;  | \; P(X \leq k) \geq \alpha \right \}\\
&=q_\alpha^-(X),
\end{align*}
which is the left $\alpha$-quantile of $X$.\
Notice that if $\alpha=1$, then $H_{\Phi_\alpha}(X)=\esssup(X)$.\ 
\end{example}

\begin{example}[Expectiles] \label{ex:expectiles}
Let $X \geq 0$ with $ \E[X] < +\infty$ and let 
$$\Phi_\alpha(x)=1+ \alpha(x-1)_{+} -(1-\alpha)(x-1)_{-},$$ 
with $0 < \alpha < 1$. 
Then,
\begin{align*}
H_{\Phi_\alpha}(X) &= \inf \left \{ k >0 \;  | \; \E \left [  \Phi_\alpha \left (X/k \right)  \right ] \leq 1 \right \} \\
&= \inf \left \{ k >0 \;  | \; 1+ \alpha \E [ (X/k-1)_+ ]-(1-\alpha) \E[(X/k-1)_- ] \leq 1 \right \} \\
&=  \inf \left \{ k >0 \;  | \; \alpha\E \left [ (X-k)_+  \right] \leq (1-\alpha) \E[(X-k)_-] \right\} \\
&=e_\alpha(X),
\end{align*}
which is the $\alpha$-expectile of $X$ as introduced in \cite{NP87}. 


\end{example}
\begin{example}[$L^{p+1}$-quantiles]\label{ex:lp-quantiles}
Let $0<\alpha<1$, $p>0$ and $$\Phi_{\alpha,p}(x)=1+ \alpha(x-1)_{+}^{p} -(1-\alpha)(x-1)_{-}^p.$$
Let $X \geq 0$ with $\E[X^p] < +\infty.$
Then, as before, 
\begin{align*}
H_{\Phi_{\alpha,p}}(X) &= \inf \left \{ k >0 \;  | \; \E \left [  \Phi_{\alpha,p} \left (X/k \right)  \right ] \leq 1 \right \} \\
&=  \inf \left \{ k >0 \;  | \; \alpha\E \left [ (X-k)_+^p  \right] \leq (1-\alpha) \E[(X-k)_-^p] \right\}\\
&=z_{\alpha,p}(X),
\end{align*}
which is an $L^{p+1}$-quantile in the terminology of \cite{C96}, generalizing the case of expectiles that arises if $p=1$. 
\end{example}


In Example \ref{ex:quantiles}, the function $\Phi$ does not satisfy a), b) and e); in Example \ref{ex:expectiles}, it does not satisfy a) and satisfies e) if and only if $\alpha \geq 1/2$; and in Example \ref{ex:lp-quantiles},  a) and e) are also in general not satisfied. 
This leads us to the notion of \emph{generalized Orlicz function} introduced in the following definition. 

\begin{definition}\label{def:orl}
Let $\Phi \colon [0,  +\infty) \to [0,+\infty)$ satisfy:
\begin{enumerate}[left=0pt, itemsep=0pt]
\item [a)] $\Phi(x) \leq 1$ if $x\leq 1$, $\Phi(x)>1$ if $x>1$
\item [b)] $\Phi$ is nondecreasing 
\item [c)] $\Phi$ is left-continuous. 
\end{enumerate}
Then we say that $\Phi$ is a \emph{generalized Orlicz function}, and we define the corresponding generalized Orlicz space $L^\Phi$  and the corresponding Orlicz premium $H_\Phi$ as in \eqref{eq:orl-space} and \eqref{eq:orl-prem}, respectively. 
\end{definition} 

It is immediate to check that Definition \ref{def:orl} is an extension of Definition~\ref{def:orl-old} and covers Examples~\ref{ex:quantiles}, \ref{ex:expectiles} and~\ref{ex:lp-quantiles} as special cases.\ Three features are worth commenting on. 
First, relaxing $\Phi(0)=0$ to $\Phi(0)\leq 1$ makes it unnecessary to allow $\Phi=+\infty$: the essential supremum, which in Definition~\ref{def:orl-old} required this convention, is now obtained as the case $\alpha=1$ of Example~\ref{ex:quantiles}, which has $\Phi(0)=1$.\ Accordingly, $\Phi$ is assumed to be finite throughout Definition~\ref{def:orl}.\ Second, condition a) weakens the classical normalization $\Phi(0)=0$ and $\Phi(1)=1$, but is still strong enough to ensure the constancy property of $H_\Phi$, as we will see in Proposition~\ref{prop:orl} below. 
Finally, a novel phenomenon that occurs with the generalized definition is that there can be different Orlicz functions that yield the same Orlicz premium, although they must differ by an affine transformation as the following lemma shows. 
\smallskip

\begin{lemma}\label{lem:equiv-orl}
Let $\Phi_1,\Phi_2$ be generalized Orlicz functions in the sense of Definition~\ref{def:orl}. Then $H_{\Phi_1}(X)=H_{\Phi_2}(X)$ for every $X\in L^\infty_+$ if and only if there exists $c>0$ such that
\begin{equation}\label{eq:affine-equiv}
\Phi_2(x)=1+c\,\bigl(\Phi_1(x)-1\bigr),\text { for all } x\in[0,+\infty).
\end{equation}
\end{lemma}

\begin{definition}
If \eqref{eq:affine-equiv} holds, we say that $\Phi_1$ and $\Phi_2$ are \emph{equivalent Orlicz functions} and we write 
$
\Phi_1 \sim \Phi_2. 
$
\end{definition}
This phenomenon does not occur for classical Young functions, since the additional requirement that $\Phi(0)=0$ makes the representing Young function unique.\smallskip 

The convexity of the Young function $\Phi$ implies that $H_{\Phi}$ is a norm and the space $L^{\Phi}$ a Banach space.
In the mathematics literature, there have been several directions in which Definition~\ref{def:orl-old} has been generalized. 
First of all, the early \cite{M83} considers the general case of a state-dependent Young function 
$$
\Phi \colon \Omega \times [0,+\infty) \to [0,+\infty]. 
$$
Second, the recent monograph \cite{HH19} argues that it might be sensible to weaken the requirement of convexity to the requirement that $\Phi$ is almost increasing of order $1$, which in their terminology means that there exists $c \geq 1$ such that
$$
\Phi(x)/x \leq c \cdot \Phi(y)/y, \text { for all } x<y.
$$

Finally, the classical monograph \cite{RR91} considers in Chapter 10 the case of a general nondecreasing left-continuous $\Phi$ with $\Phi(0)=0$, but with a different definition of the norm as follows:
$$
\Vert X \Vert_\Phi= \inf \{k >0 \mid \E[\Phi(\vert X \vert /k)] \leq k \},
$$
where the choice of $\le k$ in place of $\le 1$ is motivated by the need to maintain the triangle inequality without convexity.\smallskip

From a mathematical point of view, the simplest example of a generalized Orlicz function in the sense of Definition \ref{def:orl}
is $\Phi(x)= x^p$ with $0<p<1$, where the Orlicz space $L^\Phi$ becomes an $L^p$ space with $0<p<1$ and 
$
H_\Phi(X)=\E[\vert X \vert^p]^{1/p}
$.  
As is well-known, in this case $H_\Phi$ is not a norm since it fails the subadditivity property.\ However, such $L^p$ spaces are complete metric vector spaces with the metric 
$
d(X,Y):=H_\Phi(\vert X - Y \vert).  
$
Further, these spaces are not locally convex, and as a consequence they do not have nontrivial linear continuous functionals. \smallskip

More generally, $L^\Phi$ is still a vector space under fairly general conditions. 

\begin{lemma}\label{lemma:vec}
Let $\Phi$ be a generalized Orlicz function in the sense of 
Definition~\ref{def:orl} and assume that $\Phi(0)=0$, and that $\Phi$ is concave and right-continuous at $0$. Then $L^\Phi$ is a vector space.
\end{lemma}






The properties required to the Orlicz function $\Phi$ in Definition~\ref{def:orl} are necessary to preserve the fundamental properties of Orlicz premia, with the exception of convexity.
In order to state them we recall the following standard terminology, adapted to the present case of functionals defined on positive random variables. 
\begin{definition}
A functional $\rho \colon \mathcal{X} \to \R$ on a domain $L^\infty_+(\Omega, \F, P) \subseteq \mathcal{X} \subseteq L^0_+(\Omega, \F, P)$ is 
\begin{itemize}
\item monotone, if $X \leq Y \Rightarrow \rho(X) \leq \rho(Y)$
\item positive, if $X \geq 0$ and $P(X>0)>0$ implies that $\rho(X) > 0$
\item positively homogeneous, if $\rho( \lambda X)= \lambda \rho(X), \, \text { for all } \lambda \geq 0$ 
\item cash-additive, if $\rho(X+c) = \rho(X) + c$, for all $c \in [0,+\infty)$

\item cash-sub(super)additive, if $\rho(X+c) \leq  (\geq)\, \rho(X) + c$, for all $c \in [0,+\infty)$


\item law-invariant, if $X \overset{d}{=} Y$ implies that $\rho(X)=\rho(Y)$, where $\overset{d}{=}$ stands for equality in distribution of random variables. 
\end{itemize}
\end{definition}
We denote by $\M([0,+\infty))$ the set of positively supported probability measures and by $\M_c([0,+\infty))$ its subset of compactly supported measures. 
Law-invariant functionals induce functionals on $\M([0,+\infty))$ by setting $\rho(F):= \rho(X)$, with $X \sim F$. 
Nonatomicity of the underlying probability space $(\Omega, \F, P)$ ensures the existence of a random variable with any distribution $F$.

\begin{proposition}\label{prop:orl}
Let $\Phi$ and $H_\Phi(X)$ be as in Definition~\ref{def:orl}. 
Then:
\begin{enumerate}[label=\alph*), left=0pt, itemsep=0pt]
\item $H_\Phi$ is monotone, positively homogeneous and satisfies $H_\Phi (c) = c$, for all $c \in [0,+\infty)$
\item If $H_{\Phi}(X)>0$, then $H_\Phi(X)= \min \{k>0 \mid \E[\Phi(X/k)] \leq 1 \}$
\item  It holds that $H_\Phi(X) \leq 1 \iff \E[\Phi(X)] \leq 1$
\item  If $\Phi$ is strictly increasing and continuous and $X \in L^\infty(\Omega, \F, P)$, then $H_\Phi$ is the unique solution of the equation
$
\E\left [ \Phi \left (X/H_{\Phi} \right) \right] = 1
$
\item If $\Phi$ is convex, then $H_\Phi$ is convex 
\item If there exists $u < 1$ with $\Phi (u) < 1$, then $H_\Phi$ is convex only if $\Phi$ is convex
\item $H_\Phi$ is law-invariant in the sense that
$
X \overset{d}{=} Y \Rightarrow H_\Phi(X) = H_\Phi(Y).
$
\end{enumerate}
\end{proposition}

\smallskip

We note the following.
If the Orlicz function $\Phi$ does not 
satisfy left-continuity, then properties c) and d) in Proposition~\ref{prop:orl} may not hold. 
Property d) can also fail if $X$ is not essentially bounded. 
Notice also that $H_\Phi(X)=1$ does not necessarily imply that $\E[\Phi(X)]=1$.\ The requirement that there exists $u<1$ with $\Phi(u)<1$ to prove the only if condition for convexity in item f) is automatically satisfied by Young functions, but has to be added in the generalized Orlicz case. 
Indeed, as item a) of the following lemma shows, any possibly nonconvex generalized Orlicz function with $\Phi(0)=1$ yields the essential supremum, which is convex. 
\smallskip

The following lemma characterizes positive generalized Orlicz premia. 
\begin{lemma}\label{lem:positivity}
Let $\Phi$ and $H_\Phi$ be as in
Definition~\ref{def:orl}. 
Then exactly one of the following cases holds:
\begin{enumerate}[left=0pt, itemsep=0pt]
\item[a)] if $\Phi(0)=1$, then $H_\Phi(X)=\esssup X$, and $H_\Phi$ is positive
\item[b)] if $\Phi(0)<1$ and $\Phi(+\infty)=+\infty$, 
then $H_\Phi$ is positive
\item[c)] if $\Phi(0)<1$ and $\Phi(+\infty)<+\infty$, 
then $H_\Phi$ is not positive.
\end{enumerate}
\end{lemma}

\smallskip

Notice that the usual quantiles considered in Example \ref{ex:quantiles} are not positive, with the exception of the case $\alpha=1$ that corresponds to the essential supremum. 

\smallskip

A classical result about convex Orlicz premia is that they are cash-additive if and only if they collapse to the mean, as has been proved in \cite{HG82, GDH84} under the additional assumption that the Young function $\Phi$ is differentiable.\ 
Remarkably, enlarging the class of loss functions as in Definition~\ref{def:orl} enlarges the class of cash-additive Orlicz premia as established in the following.

\begin{theorem}\label{th:expectiles}
Let $\Phi$ and $H_\Phi(X)$ be as in Definition~\ref{def:orl} with $\Phi$ continuous and strictly increasing.\ 
Then:
\begin{itemize}
\item [a)] $H_{\Phi}$ is cash-additive if and only if there exist $\alpha \in (0,1)$ and $p > 0$ such that
\begin{equation*}
\Phi(x) \sim 1+\alpha(x-1)_+^p-(1-\alpha)(x-1)_-^p,
\end{equation*}
which gives $H_{\Phi}(X)=z_{\alpha,p}(X)$. 
\item [b)] $H_{\Phi}$ is cash-additive and convex (resp.\ concave) if and only if
\begin{equation*}
\Phi(x) \sim 1+\alpha(x-1)_+ -(1-\alpha)(x-1)_-,
\end{equation*}
which gives $H_{\Phi}(X)=e_\alpha(X)$, with $\alpha \geq 1/2$ (resp.\ $\alpha \leq 1/2$). 
\end{itemize}
\end{theorem}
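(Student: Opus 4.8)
The plan is to turn the single scalar constraint of cash-additivity into a functional equation for $\Phi$ by testing $H_\Phi$ on two-point distributions and then translating them. Since $\Phi$ is finite, continuous and strictly increasing, Definition~\ref{def:orl}a) forces $\Phi(1)=1$, and part~e) of Proposition~\ref{prop:orl} lets me replace the statement ``$H_\Phi(X)=k$'' by the single equation $\E[\Phi(X/k)]=1$ whenever $k>0$. I would first record the upside and downside profiles of $\Phi$ separately, setting $\phi_+(s):=\Phi(1+s)-1$ for $s>0$ and $\phi_-(r):=1-\Phi(1-r)$ for $r\in(0,1]$; by Definition~\ref{def:orl}a) and strict monotonicity both are positive, strictly increasing and continuous, with $\phi_+(1)=\Phi(2)-1>0$ and $\phi_-(1)=1-\Phi(0)>0$.

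For fixed $s>0$ and $r\in(0,1]$ I would take $X$ with $P(X=1-r)=\lambda$ and $P(X=1+s)=1-\lambda$, choosing $\lambda=\phi_+(s)/(\phi_+(s)+\phi_-(r))\in(0,1)$ so that $\E[\Phi(X)]=1$, i.e.\ $H_\Phi(X)=1$. Cash-additivity then gives $H_\Phi(X+h)=1+h$ for admissible $h$, and dividing the two atoms of $X+h$ by $1+h$ turns the defining equation into $(1-\lambda)\phi_+(\theta s)=\lambda\phi_-(\theta r)$, where $\theta:=1/(1+h)$ ranges over $(0,1/r]$ (the constraint $X+h\ge 0$). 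Eliminating $\lambda$ through its value above yields the separated identity
\[
\frac{\phi_+(\theta s)}{\phi_+(s)}=\frac{\phi_-(\theta r)}{\phi_-(r)}.
\]
Because the left-hand side does not involve $r$ and the right-hand side does not involve $s$, both must equal a single function $\psi(\theta)$; taking $r$ small enough extends this to all $\theta>0$ and all $s>0$. Setting $s=1$ gives $\psi=\phi_+/\phi_+(1)$, so $f:=\phi_+/\phi_+(1)$ solves the multiplicative Cauchy equation $f(\theta s)=f(\theta)f(s)$, and being continuous and strictly increasing it must be $f(s)=s^p$ with $p>0$. The decisive point is that the \emph{same} multiplier $\psi(\theta)=\theta^p$ governs $\phi_-$, so $\phi_-(r)=\phi_-(1)\,r^p$ as well. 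Reading off $\phi_\pm$ with $a:=\phi_+(1)$ and $b:=\phi_-(1)$ then gives $\Phi(x)=1+a(x-1)_+^p-b(x-1)_-^p$ with $a,b>0$ and $p>0$, which is part~a).

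For part~b) I would feed this power form into the convexity hypothesis. In the convex case the hypotheses of part~g) of Proposition~\ref{prop:orl} hold (any $u_1\in(0,1)$ has $\Phi(u_1)<1$, any $u_2>1$ has $\Phi(u_2)<\infty$), so $H_\Phi$ convex forces $\Phi$ convex; a second-derivative check of $x\mapsto 1+a(x-1)_+^p-b(x-1)_-^p$ shows the upside branch is convex only for $p\ge 1$ and the downside branch only for $p\le 1$, hence $p=1$, and convexity across the kink at $x=1$ (right slope $a$, left slope $b$) forces $a\ge b$. The concave case is entirely symmetric, using the concave analogue of g), and yields $p=1$ with the reversed inequality.

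The main obstacle is the middle step: converting cash-additivity into a genuine functional equation and, above all, coupling the upside and downside profiles through one and the same multiplier $\psi(\theta)$, which is precisely what forces a \emph{common} exponent $p$ on both branches rather than two unrelated powers. Secondary care is needed with the admissible ranges of $h$ (equivalently of $\theta$ and $r$) to keep $X+h$ in $L^\infty_+$ and to validate the multiplicative equation on all of $(0,\infty)$ before invoking continuity.
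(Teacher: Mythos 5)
Your proposal is correct and follows essentially the same route as the paper: both test cash-additivity on two-point distributions calibrated so that $H_\Phi=1$, use positive homogeneity to turn the translation into a rescaling of the two branches of $\Phi(\cdot+1)-1$, and reduce to a multiplicative Pexider-type functional equation whose continuous, strictly increasing solutions are the common powers $a s^p$ and $b r^p$, with part b) then following from convexity/concavity of the resulting piecewise power function at the kink. The only differences are presentational: you separate variables explicitly and solve the Cauchy equation by hand (and are somewhat more careful about the admissible range of the shift $h$), whereas the paper invokes Acz\'el's theorem on the Pexider equation directly.
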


We conclude that a cash-additive generalized Orlicz premium is an $L^p$-quantile that is necessarily an expectile if additionally convexity or concavity holds. 

\begin{remark}
The usual quantile considered in Example~\ref{ex:quantiles} is excluded by the strict
monotonicity and continuity assumptions in Theorem~\ref{th:expectiles}, although it would
correspond formally to the limiting case $p=0$ in the family
$\Phi(x)=1+\alpha(x-1)_+^p-(1-\alpha)(x-1)_-^p$, and would satisfy cash-additivity. 
Our proof technique cannot be applied without the assumptions of strict monotonicity and continuity of $\Phi$. 
It might be possible to derive a more general formulation of Theorem~\ref{th:expectiles} including the usual quantiles, but we do not pursue this objective here. 
\end{remark}

We end the section with an interesting example in which cash-additivity fails, but cash-subadditivity or cash-superadditivity is satisfied.  

\begin{example}[$L^{p,q}$-quantiles]
\label{ex:lpq-quantiles}
An interesting generalization of Example \ref{ex:lp-quantiles} is given by the family of Orlicz functions
$$
\Phi(x) = 1 + \alpha(x-1)_+^p - (1-\alpha)(x-1)_-^q, 
$$
with $0<\alpha<1$ and $p,q > 0$.\ 
As a consequence of Theorem~\ref{th:expectiles}, cash-additivity holds if and only if $p=q$, as can be verified directly. 
Note that the $L^{p,q}$-quantile is cash-subadditive if $p\geq q$ and cash-superadditive if $p\leq q$. 
Indeed, letting $H_{\Phi}(X)=k$, then d) in Proposition~\ref{prop:orl} implies that
\[ \alpha \E\sqb{\of{\frac{X-k}{k}}_+^p}=(1-\alpha)\E\sqb{\of{\frac{X-k}{k}}_-^q}.\]
For any $c\geq 0$, assuming $p\geq q$ we have $\of{\frac{c+k}{k}}^{p-q}\geq 1$, which gives
\begin{align*}
    & \alpha \E\sqb{\of{\frac{X-k}{k}}_+^p}\leq (1-\alpha) \E\sqb{\of{\frac{X-k}{k}}_-^q} \of{\frac{c+k}{k}}^{p-q}
    \\& \Rightarrow \alpha \E\sqb{\of{\frac{X-k}{c+k}}_+^p}\leq (1-\alpha) \E\sqb{\of{\frac{X-k}{c+k}}_-^q} 
    \\& \Rightarrow \alpha \E\sqb{\of{\frac{X+c-\tilde{k}}{\tilde{k}}}_+^p}\leq (1-\alpha) \E\sqb{\of{\frac{X+c-\tilde{k}}{\tilde{k}}}_-^q},
\end{align*}
where $\tilde{k}=k+c$. 
By the definition of generalized Orlicz premia, we have
$$
H_{\Phi}(X+c)\leq \tilde{k}=H_{\Phi}(X)+c.
$$
\noindent
The case $p\leq q$ follows by a similar argument. 
\end{example}

\section{Two classes of generalized Orlicz premia}
\label{sec:gg-conv}
The flexibility gained by relaxing convexity in Definition~\ref{def:orl}, opens the door to the consideration of much more general families of Orlicz functions and, hence, induced Orlicz premia. 
In this section, we focus on two specific classes of nonconvex loss functions for which the corresponding Orlicz premia display interesting actuarial and mathematical properties.  
\smallskip

The first class consists of \emph{concave-convex} Orlicz functions, which are based on the corresponding idea of concave-convex utility functions of Friedman and Savage~\cite{FS48} and of S-shaped value functions of Kahneman and Tversky~\cite{KT79}. 
In these theories, the qualitative shape (concavity above the reference level, convexity below) is aimed to capture the behavioral pattern of risk aversion in the domain of gains and risk seeking in the loss domain. 
From the actuarial perspective, this allows for premium principles in which deviations from a reference outcome are weighted asymmetrically and according to two qualitatively distinct regimes (concave-convex or convex-concave).\smallskip

The second class consists of \emph{GA-convex} Orlicz functions, for which the corresponding Orlicz premium is geometrically convex (GG-convex). 
In the recent literature, it is demonstrated that GG-convexity is the natural counterpart of the usual (arithmetic) convexity when the underlying objects are multiplicative quantities such as gross returns (see, e.g.,~\cite{BLR18,BLR21,LR22} for the initial theory of return risk measures (RRMs), and \cite{LRZ24,LRZ23,MR22} for related connections to star-shaped risk measures). \smallskip 

Furthermore, Proposition~\ref{prop:GA-cx}, item d), below shows that GA-convex functions are actually those functions that are more convex than the logarithm,
in the sense of the relative notion of convexity based on pointwise comparison of the Arrow-Pratt coefficient (with the proper sign change for a loss function) given by $\Phi''(x)/\Phi'(x)$. 
The class of GA-convex Orlicz functions is strictly larger than the class of convex ones, and accordingly the class of GG-convex Orlicz premia strictly enlarges the convex Orlicz premia, exhibiting a parallelism between the two convexity notions that will become particularly transparent in Corollary~\ref{cor:GG-CI}.

\subsection{Concave-convex Orlicz functions}
The S-shaped value functions of prospect theory describe how individuals 
evaluate gains and losses asymmetrically with respect to a reference point: 
concave above (risk aversion in gains) and convex below (risk seeking in 
losses), with possibly different curvatures on the two sides of the reference point. 
In the Orlicz 
framework, the natural reference point is $x=1$, since $\Phi(1)\leq 1<\Phi(x)$ 
for $x>1$ separates the two regimes. 
This 
suggests building $\Phi$ by piecing together two normalized convex Young functions on 
the gain and loss sides of the reference point.
(The convex-concave case can be constructed \emph{mutatis mutandis}.)
\begin{definition}
\label{def:conc-conv-orl}
We say that $\Phi$ is a concave-convex Orlicz function if
$$
\Phi(x)= 1+ \alpha \Phi_1 \left ((x-1)_+ \right) - (1-\alpha) \Phi_2 \left ((x-1)_- \right),
$$
where $\alpha \in (0,1)$ and $\Phi_1$,$\Phi_2$ are normalized, convex and finite Young functions. 
\end{definition}
It is immediate to check that the Orlicz function $\Phi$ in this definition satisfies a), b), c) of Definition \ref{def:orl}.\ The prototypical case has been considered in Example~\ref{ex:lp-quantiles} with $p>1$ and
the corresponding Orlicz premium  $z_{\alpha,p}$ is an $L^{p+1}$-quantile in the terminology of \cite{C96}, where $p=1$ corresponds to the case of expectiles. 
By construction, $z_{\alpha,p}$ satisfies monotonicity, cash-additivity and positive homogeneity, but not convexity. 
In the case of an integer $p$, consistency properties of these quantiles with respect to $p$-convex orders have been determined in \cite {B12}. 
The following proposition collects a few general properties of Orlicz premia induced by concave-convex Orlicz functions.

\begin{proposition}\label{prop:conv-concave-orl}
Let $\Phi$ be a concave-convex Orlicz function as in Definition~\ref{def:conc-conv-orl}, with $\alpha \in (0,1)$ and $\Phi_1, \Phi_2$ finite, continuous and strictly increasing normalized convex Young functions. 
Let $X \in L^\infty_+$ with $H_\Phi(X)>0$. Then:
\begin{enumerate}[left=0pt, itemsep=0pt]
\item[a)] \emph{(First-order condition).} $k = H_\Phi(X)$ is the unique positive solution of
\begin{equation}\label{eq:foc-cc}
\alpha\,\E\!\left[\Phi_1\!\left(\tfrac{(X-k)_+}{k}\right)\right] = (1-\alpha)\,\E\!\left[\Phi_2\!\left(\tfrac{(k-X)_+}{k}\right)\right].
\end{equation}
\item[b)] \emph{(Monotonicity in the parameters).}
Denote by $H_{\alpha,\Phi_1,\Phi_2}$ the Orlicz premium associated with the parameters $(\alpha,\Phi_1,\Phi_2)$.
\begin{enumerate}[leftmargin=1.4em, itemsep=0pt]
\item[(i)] If $\Phi_1\leq \tilde\Phi_1$ pointwise, then $H_{\alpha,\Phi_1,\Phi_2}(X)\leq H_{\alpha,\tilde\Phi_1,\Phi_2}(X)$. 
\item[(ii)] If $\Phi_2\leq \tilde\Phi_2$ pointwise, then $H_{\alpha,\Phi_1,\Phi_2}(X)\geq H_{\alpha,\Phi_1,\tilde\Phi_2}(X)$.
\item[(iii)] The map $\alpha \mapsto H_{\alpha,\Phi_1,\Phi_2}(X)$ is nondecreasing on $(0,1)$, and strictly increasing whenever $P(X\neq H_{\alpha,\Phi_1,\Phi_2}(X))>0$. 
\end{enumerate}
\item[c)] \emph{($(\Phi_1,\Phi_2)$-order).}
For $X, Y \in L^\infty_+$, define 
\begin{equation*}
X \preceq_{\Phi_1,\Phi_2} Y \text { if }  
\begin{cases}
\E\!\left[\Phi_1\!\left(\tfrac{(X-t)_+}{t}\right)\right] \leq \E\!\left[\Phi_1\!\left(\tfrac{(Y-t)_+}{t}\right)\right] \\[4pt]
\E\!\left[\Phi_2\!\left(\tfrac{(t-X)_+}{t}\right)\right] \geq \E\!\left[\Phi_2\!\left(\tfrac{(t-Y)_+}{t}\right)\right]
\end{cases}
\text {for all } t>0.
\end{equation*}
Then $X \preceq_{\Phi_1,\Phi_2} Y$ implies $H_{\alpha,\Phi_1,\Phi_2}(X) \leq H_{\alpha,\Phi_1,\Phi_2}(Y)$ for every $\alpha \in (0,1)$. 
\end{enumerate}
\end{proposition}

As a corollary, we generalize to the case of $L^{p,q}$-quantiles of Example~\ref{ex:lpq-quantiles} the isotonicity results for $L^p$-quantiles found in \cite{B12}. 

\begin{corollary}\label{cor:lpq-order}
Let $\Phi_1(x) = x^p$ and $\Phi_2(x) = x^q$, with $p, q \geq 1$, so that $H_{\alpha,\Phi_1,\Phi_2}$ is the $L^{p,q}$-quantile of Example~\ref{ex:lpq-quantiles}. 
For $X, Y \in L^\infty_+$, define
\[
X \preceq_{p,q} Y \text { if } 
\begin{cases}
\E[(X-t)_+^p] \leq \E[(Y-t)_+^p] \\[2pt]
\E[(t-X)_+^q] \geq \E[(t-Y)_+^q]
\end{cases}
\text {for all } t >0. 
\]
Then:
\begin{enumerate}[left=0pt, itemsep=0pt]
\item[a)] $X \preceq_{p,q} Y$ implies $H_{\alpha,\Phi_1,\Phi_2}(X) \leq H_{\alpha,\Phi_1,\Phi_2}(Y)$ for every $\alpha \in (0,1)$. 
\item[b)] For $p = q$ even integers, the order $\preceq_{p,p}$ coincides with the $p$-convex order of \cite{B12}, and the statement in~(a) recovers the consistency result for $L^p$-quantiles therein.    
\end{enumerate}
\end{corollary}

\subsection{GA-convex Orlicz function}
The notion of GA-convexity arises in the so-called algebraic theory of convexity, in which the geometric mean (denoted by G) replaces the arithmetic mean (denoted by A) in the usual definition of convexity, leading to the following.

\begin{definition}\label{def:GA-conv}
A function $f \colon [0,+\infty) \to [-\infty, +\infty] $ is GA-convex if for each $x,y \in [0,+\infty) $ and $\lambda \in (0,1)$ it holds that
\[
f(x^\lambda  y^{1-\lambda}) \leq \lambda f(x) + (1-\lambda) f(y),
\]
where we set $- \infty + \infty = +\infty$ by definition.\ 
\end{definition}




For completeness, we recall in the following proposition the basic properties of GA-convex functions that can be found in \cite{N00, NP04}.
\begin{proposition}\label{prop:GA-cx}
Let $f\colon [0,+\infty) \to [0, +\infty)$. 
Then:
\begin{itemize}
\item [a)] If $f$ is nondecreasing and convex, then it is GA-convex
\item [b)] $f$ is GA-convex if and only if $f(e^x)$ is convex
\item [c)] if $f$ is of class $C^2$, then $f$ is GA-convex if and only if
$$
xf''(x) + f'(x) \geq 0
$$
\item [d)] if $f$ is of class $C^2$ with $f'(x)>0$, then $f$ is GA-convex if and only if
\begin{equation} 
\label{eq:comp-conv}
\frac{f''(x)}{f'(x)} \geq -\frac{1}{x}.
\end{equation}
\end{itemize}
\end{proposition}

\smallskip

The quantity on the left-hand side of equation \eqref{eq:comp-conv} is referred to as the \emph{local convexity coefficient} of $f$ at $x$, and measures the curvature of $f$ in a neighborhood of $x$, similar to the Arrow-Pratt coefficient of absolute risk aversion in utility theory. 
Equation \eqref{eq:comp-conv} thus gives an interpretation of GA-convexity as a form of comparative convexity: $f$ is GA-convex if and only if it is more convex than a log function, while $f$ is convex if and only if it is more convex than a linear function.  
A closely related notion is that of a GG-convex functional.\smallskip 

\begin{definition}\label{def:GG-GA-2}
A functional $\rho \colon L^\infty_{+} \to [0,\infty]$ is GG-convex if, for each $X,Y \in L^\infty_{+}$ 
and $\lambda \in (0,1)$, 
\begin{equation*}
\rho (X^\lambda Y^{1-\lambda}) \leq  \rho^\lambda(X)  \rho^{1-\lambda} (Y), 
\end{equation*}
where we set $0 \cdot (+\infty) = +\infty$ by definition. 
\end{definition}
\noindent

\begin{proposition} Let $\Phi$ and $\rho$ be as in Definitions~\ref{def:GA-conv} and~\ref{def:GG-GA-2}. 
Then:
\label{prop:alg-conv}
\begin{itemize}
\item [a)]If $\Phi$ is nondecreasing and convex, then it is GA-convex
\item [b)]If $\rho$ is monotone, positively homogeneous and convex, then it is GG-convex.
\end{itemize}
\end{proposition}

From Proposition~\ref{prop:orl}, items f) and g), we know that the convexity of $\Phi$ is equivalent to the convexity of $H_\Phi$, under very general assumptions on $\Phi$. 
The next proposition shows that a similar characterization applies to GG-convexity. 

\begin{proposition}\label{prop:gg-GA}
Let $\Phi$ and $H_\Phi(X)$ be as in Definition~\ref{def:orl}. 
If $\Phi$ is GA-convex, then $H_\Phi$ is GG-convex. 
Conversely, if there exists $u < 1$ with $\Phi (u) < 1$, then $H_\Phi$ is GG-convex only if $\Phi$ is GA-convex.
\end{proposition}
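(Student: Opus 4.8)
The plan is to mirror the structure of the convexity characterization in Proposition~\ref{prop:orl} items f) and g), but transported to the multiplicative/geometric setting via the exponential change of variables recorded in Proposition~\ref{prop:alg-conv}. The key unifying idea is that GG-convexity of $H_\Phi$ and GA-convexity of $\Phi$ both become ordinary convexity statements after composing with $\log$ and $\exp$ in the appropriate places: by Proposition~\ref{prop:alg-conv}b), $\Phi$ is GA-convex iff $x\mapsto\Phi(e^x)$ is convex, and $\rho$ is GG-convex iff $x\mapsto\log\rho(e^x)$ is convex (applied pointwise to random variables). I would set up this dictionary first and then do the two directions separately.

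For the forward direction (GA-convexity of $\Phi$ implies GG-convexity of $H_\Phi$), let $X,Y\in L^\infty_+$ and $\lambda\in(0,1)$, write $k_X=H_\Phi(X)$ and $k_Y=H_\Phi(Y)$, and set $k=k_X^\lambda k_Y^{1-\lambda}$. By Proposition~\ref{prop:orl}d) it suffices to show $\E[\Phi((X^\lambda Y^{1-\lambda})/k)]\leq 1$, since this is equivalent to $H_\Phi(X^\lambda Y^{1-\lambda})\leq k$. The crucial pointwise inequality is
\[
\Phi\!\left(\left(\tfrac{X}{k_X}\right)^{\lambda}\left(\tfrac{Y}{k_Y}\right)^{1-\lambda}\right)\leq \lambda\,\Phi\!\left(\tfrac{X}{k_X}\right)+(1-\lambda)\,\Phi\!\left(\tfrac{Y}{k_Y}\right),
\]
which is exactly GA-convexity of $\Phi$ applied with arguments $X/k_X$ and $Y/k_Y$. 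Taking expectations and using $\E[\Phi(X/k_X)]\leq 1$ and $\E[\Phi(Y/k_Y)]\leq 1$ (from Proposition~\ref{prop:orl}c)--d)) gives the bound $\leq 1$, after noting that $(X/k_X)^\lambda(Y/k_Y)^{1-\lambda}=(X^\lambda Y^{1-\lambda})/k$. Some care is needed with the degenerate cases where $k_X$ or $k_Y$ is $0$ (using the convention in Definition~\ref{def:orl-prem}) or where $\Phi$ takes the value $+\infty$, but these follow the conventions already fixed in Definitions~\ref{def:GG-GA-1} and \ref{def:orl-prem}.

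For the converse, I would argue by contrapositive along the lines of the proof of Proposition~\ref{prop:orl}g): assuming the nondegeneracy hypotheses (the existence of $u_1<1$ with $\Phi(u_1)<1$ and $u_2>1$ with $\Phi(u_2)<+\infty$, which prevent $\Phi$ from being trivially flat near $1$), I would suppose $\Phi$ is \emph{not} GA-convex and construct a two-point random variable witnessing failure of GG-convexity of $H_\Phi$. Concretely, if $x\mapsto\Phi(e^x)$ fails convexity at some interior point, pick $s=\log x_1$, $t=\log x_2$ and a weight $\lambda$ where the chord lies strictly below the graph, and build $X$ supported on $\{x_1,x_2\}$ with probabilities tuned (using nonatomicity) so that $H_\Phi(X)$ and $H_\Phi$ of the geometric combination can be computed from the defining equation $\E[\Phi(X/H_\Phi)]=1$. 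The nondegeneracy conditions are what let me realize the required probability masses and keep the relevant arguments of $\Phi$ strictly on both sides of $1$, so that the inf in the definition is attained in the interior rather than at an endpoint. I expect this converse to be the main obstacle: the forward direction is a clean one-line pointwise inequality followed by integration, whereas the converse requires carefully choosing the witnessing distribution so that the strict GA-convexity failure of $\Phi$ propagates to a strict GG-convexity failure of $H_\Phi$, and verifying that the premium values are pinned down exactly (which is where finiteness, strict monotonicity, and continuity of $\Phi$, together with Proposition~\ref{prop:orl}e), are used).
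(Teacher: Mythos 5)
Your forward direction is exactly the paper's argument: apply GA-convexity of $\Phi$ pointwise to $X/H_\Phi(X)$ and $Y/H_\Phi(Y)$, take expectations, invoke Proposition~\ref{prop:orl} (items c)--d)) to get $H_\Phi\bigl(X^\lambda Y^{1-\lambda}/(H_\Phi(X)^\lambda H_\Phi(Y)^{1-\lambda})\bigr)\leq 1$, and finish with positive homogeneity. No issues there.

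The converse, however, has a genuine gap as sketched. First, a geometric combination needs \emph{two} random variables, and you never say what the second one is; a single $X$ supported on $\{x_1,x_2\}$ gives $X^\lambda X^{1-\lambda}=X$ and witnesses nothing. The paper's construction takes \emph{three} atoms and a pair $X=z 1_A+x_1 1_B+x_2 1_C$, $Y=z 1_A+x_2 1_B+x_1 1_C$ with $P(B)=P(C)$, so that $X\overset{d}{=}Y$ (hence $H_\Phi(X)=H_\Phi(Y)$ and $\E[\Phi(X)]=\E[\Phi(Y)]$) while $\sqrt{XY}=z1_A+\sqrt{x_1x_2}\,1_{B\cup C}$ concentrates mass at the geometric midpoint where GA-midconvexity fails; the auxiliary value $z$ (taken as $1$, $u_1$ or $u_2$ depending on where $\Phi(\sqrt{x_1x_2})$ and $(\Phi(x_1)+\Phi(x_2))/2$ sit relative to $1$) and the weight $\lambda=P(A)$ are chosen precisely so that $\E[\Phi(\sqrt{XY})]>1\geq\E[\Phi(X)]$. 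That is the actual role of the nondegeneracy hypotheses on $u_1,u_2$ --- they supply values of $\Phi$ strictly below $1$ and finite values above $1$ so the threshold $1$ can be made to separate the two sides of the violated midpoint inequality; they are not about where the infimum in the definition of $H_\Phi$ is attained. Second, your plan to ``pin down'' $H_\Phi$ via the equation $\E[\Phi(X/H_\Phi)]=1$ relies on Proposition~\ref{prop:orl}~e), whose hypotheses (finiteness, strict monotonicity, continuity of $\Phi$) are \emph{not} assumed in Proposition~\ref{prop:gg-GA}; no exact computation of premia is needed or available. The correct tool is Proposition~\ref{prop:orl}~d): $\E[\Phi(W)]\leq 1\iff H_\Phi(W)\leq 1$, which immediately yields $H_\Phi(X),H_\Phi(Y)\leq 1<H_\Phi(\sqrt{XY})$, contradicting GG-convexity. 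Finally, you should conclude as the paper does: the construction only rules out failure of GA-\emph{mid}convexity, and one passes from midconvexity to convexity of $x\mapsto\Phi(e^x)$ using monotonicity of $\Phi$.
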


\smallskip

So, under general assumptions on $\Phi$, the GG-convexity of $H_\Phi$ corresponds to the GA-convexity of $\Phi$.\ This is consistent with items a) and b) in Proposition~\ref{prop:alg-conv}, which imply that the class of GG-convex generalized
Orlicz premia is strictly larger than the class of convex generalized
Orlicz premia, and that the corresponding class of nondecreasing
GA-convex Orlicz functions is strictly larger than the class of
nondecreasing convex Orlicz functions.\ 
Reconsidering the examples at the beginning of Section~\ref{sec:Orl},
it is not difficult to verify that the Orlicz function $\Phi$ is GA-convex in 
Example~\ref{ex:lpq-quantiles} in the case $\alpha \geq 1/2$ and $p=q=1$ that corresponds to convex expectiles, whereas it does not satisfy the GA-convexity property in all the other cases.\ As a corollary of Proposition~\ref{prop:gg-GA}, we obtain the following stronger formulation of item b) in Theorem~\ref{th:expectiles}. 

\begin{corollary}
\label{cor:GG-CI}
Let $\Phi$ and $H_\Phi(X)$ be as in Definition~\ref{def:orl} with $\Phi$ continuous and strictly increasing.\ 
If $H_{\Phi}$ is cash-additive and geometrically convex (resp.\ concave), then $H_\Phi(X)=e_\alpha(X)$ with $\alpha \geq 1/2$ (resp.\ $\alpha \leq 1/2$). 
\end{corollary}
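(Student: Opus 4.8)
The plan is to derive everything from Theorem~\ref{th:expectiles}(a) combined with Proposition~\ref{prop:gg-GA}, leaving only a one-variable convexity computation. Since $H_\Phi$ is cash-additive and $\Phi$ is finite, continuous and strictly increasing, Theorem~\ref{th:expectiles}(a) already yields
\[
\Phi(x) = 1 + a(x-1)_+^p - b(x-1)_-^p
\]
for some $a,b>0$ and $p\ge 0$; it then remains to use geometric convexity to force $p=1$ and the ordering $a\ge b$.

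First I would check that $\Phi$ meets the hypotheses of the converse part of Proposition~\ref{prop:gg-GA}. Because $\Phi$ is continuous and strictly increasing with $\Phi(1)=1$ (as forced by Definition~\ref{def:orl}(a)), any $u_1<1$ satisfies $\Phi(u_1)<1$ and, by finiteness, any $u_2>1$ satisfies $\Phi(u_2)<+\infty$. Hence GG-convexity of $H_\Phi$ forces GA-convexity of $\Phi$, which by Proposition~\ref{prop:alg-conv}(b) is equivalent to ordinary convexity of $g(t):=\Phi(e^t)$ on $\R$.

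The core of the argument is a second-derivative analysis of $g$ on each side of the kink at $t=0$. For $t>0$ one has $g(t)=1+a(e^t-1)^p$, and a short computation gives $g''(t)=ap\,e^t(e^t-1)^{p-2}(pe^t-1)$, whose sign is that of $pe^t-1$; letting $t\downarrow 0$ shows that convexity on the right branch requires $p\ge 1$. For $t<0$ one has $g(t)=1-b(1-e^t)^p$ and $g''(t)=bp\,e^t(1-e^t)^{p-2}(1-pe^t)$, whose sign is that of $1-pe^t$; letting $t\uparrow 0$ requires $p\le 1$. Together these give $p=1$, and with $p=1$ the one-sided slopes at the kink are $g'(0^+)=a$ and $g'(0^-)=b$, so convexity of $g$ through $t=0$ demands $g'(0^-)\le g'(0^+)$, i.e.\ $a\ge b$. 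I expect the main obstacle to be precisely the sign bookkeeping near $t=0$: the factors $(e^t-1)^{p-2}$ and $(1-e^t)^{p-2}$ are positive but singular as $t\to 0$, so the constraints on $p$ must be read off from the limiting sign of the bracketed terms rather than at any fixed interior point.

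The concave case is symmetric. Using the GG-concave analogue of Proposition~\ref{prop:gg-GA}, established by the same reasoning, GG-concavity of $H_\Phi$ gives GA-concavity of $\Phi$, equivalently concavity of $g$; reversing all inequalities above yields $p\le 1$ on the right branch and $p\ge 1$ on the left, hence again $p=1$, while the kink condition $g'(0^-)\ge g'(0^+)$ now gives $b\ge a$. In both cases, setting $\alpha=a/(a+b)$ identifies $\Phi$ with the expectile $\Phi_\alpha$ of Example~\ref{ex:expectiles}, consistent with convexity of $\Phi_\alpha$ for $\alpha\ge 1/2$ and concavity for $\alpha\le 1/2$.
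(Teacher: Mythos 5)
Your treatment of the geometrically convex case is essentially the paper's own proof: Theorem~\ref{th:expectiles}(a) gives the form $\Phi(x)=1+a(x-1)_+^p-b(x-1)_-^p$, the converse part of Proposition~\ref{prop:gg-GA} (whose hypotheses you rightly verify) gives GA-convexity of $\Phi$, and Proposition~\ref{prop:alg-conv}(b) reduces the problem to convexity of $g(t)=\Phi(e^t)$; the paper leaves the last step as ``a direct check,'' and your second-derivative analysis with the one-sided limits at $t=0$ carries it out correctly (only note that $p=0$ must be excluded separately, which strict monotonicity of $\Phi$ does).

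The concave paragraph, however, misfires. Your own formula $g''(t)=bp\,e^t(1-e^t)^{p-2}(1-pe^t)$ shows that $g''(t)\to$ a positive quantity as $t\to-\infty$ (since $1-pe^t\to 1$), and for $p=1$ the right branch has $g''(t)=ae^t>0$; so $g$ is never globally concave for $a,b>0$, and no function of the form produced by Theorem~\ref{th:expectiles}(a) is GA-concave. The local necessary conditions you extract ($p\le 1$ on the right, $p\ge 1$ on the left, $b\ge a$ at the kink) are individually valid, but they are jointly unsatisfiable, so the hypothesis ``cash-additive and geometrically concave'' is vacuous rather than characterizing the concave expectiles; in particular your closing remark identifying the outcome with ordinary concavity of $\Phi_\alpha$ for $\alpha\le 1/2$ conflates concavity of $\Phi$ with GA-concavity. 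To be fair, the paper's printed proof silently treats only the convex case, so this is as much a defect of the statement's ``(resp.\ concave)'' clause as of your argument, but as written your proof asserts a characterization in the concave case that the computation does not support.
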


Convex expectiles are thus the only geometrically convex and cash-additive Orlicz premia. 
Hence, under cash-additivity, GG-convexity of $H_\Phi$ is equivalent to its convexity, and the corresponding Orlicz premia are exactly the convex expectiles. 
The two classes diverge only when cash-additivity is dropped.

\noindent

\section{Axiomatization of generalized Orlicz premia}
\label{sec:CxLS}
As is well-known, law-invariant functionals such as generalized Orlicz premia can be seen as functionals defined on sets of distribution functions.
An important property that these functionals can have, is the convexity of their level sets with respect to mixtures, referred to in the literature for short as the CxLS property.

\begin{definition}\label{def:CxLS}
Let $\M' \subseteq \M$ be a convex set of distribution functions of probability measures on $\R$. 
A functional $\rho \colon \M' \to \R$ has the CxLS property if
\[
\rho(F)=\rho(G)=\gamma \Rightarrow \rho(\lambda F +(1-\lambda)G)=\gamma,
\]
for each $\gamma \in \R$, $F,G \in \M'$ and $\lambda \in (0,1)$.
\end{definition}
\noindent 
The CxLS property is a necessary condition for the elicitability of a functional, which can be informally defined as the property of being the minimizer of a suitable expected loss function. 
The relevance of this notion has been thoroughly discussed in the financial and statistics literature; we refer to e.g.,\  \cite{G11,BB15,Z16,EMWW21}.\smallskip

A stream of literature has studied elicitable coherent or convex risk measures, by characterizing coherent or convex risk measures satisfying the CxLS property; see, e.g.,\ \cite{W06, BB15, Z16, DBBZ14}.\ Most of the attention has been given to the cash-additive case, with the exception of \cite{BLR18}, in which the positively homogeneous case has been considered, leading in Theorem~2 of \cite{BLR18} to an axiomatization of Orlicz premia that we sharpen in the following.\smallskip  

First, we verify in the following lemma that generalized Orlicz premia satisfy the CxLS property. 
\begin{lemma}
\label{lem:cxls}
Let $\Phi$ and $H_\Phi$ be as in Definition~\ref{def:orl}.\ 
Assume that $H_\Phi(X)=H_\Phi(Y) = \gamma \in [0, + \infty)$, with $X \sim F$ and $Y \sim G$. 
Then, for each $\lambda \in (0,1)$, 
$$
Z \sim \lambda F + (1-\lambda) G \Rightarrow H_\Phi(Z) = \gamma. 
$$
\end{lemma}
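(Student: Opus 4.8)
The plan is to reduce everything to the behaviour of the map $k \mapsto \E[\Phi(W/k)]$ and to exploit that, for fixed $k$, this quantity is \emph{affine} in the law of $W$, so that mixing distributions produces a convex combination of the corresponding expectations. Throughout I would use that $H_\Phi$ is law-invariant (Proposition~\ref{prop:orl}, item h)), so that $H_\Phi(X)$, $H_\Phi(Y)$ and $H_\Phi(Z)$ depend only on $F$, $G$ and $\lambda F + (1-\lambda)G$ respectively.

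First I would record a scaled version of item d) of Proposition~\ref{prop:orl}. Applying item d) to $W/k \in L^\infty_+$ and using positive homogeneity (item a)) gives, for every $W \in L^\infty_+$ and every $k>0$,
\[
\E[\Phi(W/k)] \leq 1 \iff H_\Phi(W/k) \leq 1 \iff H_\Phi(W) \leq k .
\]
In particular, with $H_\Phi(X)=H_\Phi(Y)=\gamma$ this yields the threshold dichotomy: for $k>0$ both $\E[\Phi(X/k)]\leq 1$ and $\E[\Phi(Y/k)]\leq 1$ hold exactly when $k \geq \gamma$, while both expectations exceed $1$ (possibly taking the value $+\infty$) when $0<k<\gamma$.

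Next I would use that, for fixed $k>0$, writing $\mu = \lambda F + (1-\lambda)G$ for the law of $Z$,
\[
\E[\Phi(Z/k)] = \int \Phi(z/k)\,\mathrm{d}\mu(z) = \lambda\,\E[\Phi(X/k)] + (1-\lambda)\,\E[\Phi(Y/k)] .
\]
For $k \geq \gamma$ both summands lie in $[-\infty,1]$, so their convex combination is well defined and $\leq 1$; for $0<k<\gamma$ both lie in $(1,+\infty]$, so the combination is well defined and $>1$. Hence $\E[\Phi(Z/k)] \leq 1 \iff k \geq \gamma$, and applying the scaled item d) to $Z$ gives $H_\Phi(Z) \leq k \iff k \geq \gamma$ for all $k>0$. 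Taking $k=\gamma$ when $\gamma>0$ gives $H_\Phi(Z)\leq \gamma$, while $H_\Phi(Z)>k$ for every $k<\gamma$ forces $H_\Phi(Z)\geq\gamma$, so $H_\Phi(Z)=\gamma$. When $\gamma=0$ the same dichotomy gives $H_\Phi(Z)\leq k$ for all $k>0$, hence $H_\Phi(Z)=0$, which also covers the degenerate case governed by the convention in Definition~\ref{def:orl-prem}.

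The only genuinely delicate point is the bookkeeping with the values $\pm\infty$ that $\Phi$ may take (for instance $\Phi(0)=-\infty$ in Example~\ref{ex:GM}), since a priori the mixture expectation could be an indeterminate $+\infty-\infty$. The argument above avoids this precisely because on each side of the threshold $\gamma$ the two expectations lie on the same side of $1$: both are $\leq 1$ for $k\geq\gamma$ and both exceed $1$ for $k<\gamma$, so no cancellation of $+\infty$ against $-\infty$ occurs and the convex combination is unambiguous. I expect this $\pm\infty$ bookkeeping, together with the boundary case $\gamma=0$, to be the main (and essentially only) obstacle; the remainder is just the affine dependence of $\E[\Phi(\cdot/k)]$ on the law combined with item d).
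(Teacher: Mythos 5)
Your proof is correct and takes essentially the same route as the paper's: both rest on the identity $\E[\Phi(Z/k)]=\lambda\,\E[\Phi(X/k)]+(1-\lambda)\,\E[\Phi(Y/k)]$ combined with the observation that $X$ and $Y$ lie on the same side of the threshold $\gamma$ for every $k$, so the convex combination is unambiguous. The only cosmetic difference is that the paper argues with sequences $h_n,k_n\downarrow\gamma$ directly from the infimum definition (and dispatches the degenerate case $\Phi(0)=-\infty$, $P(Z=0)>0$ immediately via the convention $H_\Phi(Z)=0$), whereas you package the attainment of the infimum into the scaled form of Proposition~\ref{prop:orl}, item d).
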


As a preliminary lemma that will be useful in the proof of Theorem~\ref{th:axiom}, we show that the CxLS property implies the convexity with respect to mixtures of the acceptance set at the level of distributions as defined below.  

\begin{lemma}\label{lem:mconv}
Let $\rho \colon L^\infty_+ \to [0,+\infty)$ be monotone, positively homogeneous, law-invariant and positive.\ If $\rho$ has the CxLS property, then the multiplicative acceptance set at the level of distributions defined by
\[
B_\rho := \{ F \in \M_{1,c}([0,+\infty)) \mid \rho(F) \leq 1 \}
\]
and its complement $B_\rho^c$ are both convex with respect to mixtures.
\end{lemma}

Lemma~\ref{lem:mconv} is a multiplicative counterpart of Lemma~3.1(a) in \cite{DBBZ14}, where the analogous result is established for monetary law-determined utilities. 
\smallskip

In order to formulate the topological assumption needed for Theorem~\ref{th:axiom}, we recall from \cite{FS16} the notions of $\psi$-weak convergence and $\psi$-weak semicontinuity. 

\begin{definition}\label{def:psi-weak}
A continuous function $\psi \colon [0,+\infty) \to [1,+\infty)$ is called a \emph{gauge function}. 
The associated vector space of test functions is 
\[
C^{\psi} := \Big\{ f \colon [0,+\infty) \to \R \ \text{continuous} \ \Big| \ \exists\, c \geq 0 \ \text{s.t.} \ |f(x)| \leq c\, \psi(x) \ \text{for all } x \geq 0 \Big\}. 
\]
The \emph{$\psi$-weak topology} on $\M_{1,c}([0,+\infty))$ is the coarsest topology that makes the maps $F \mapsto \int f\, dF$ continuous for every $f \in C^{\psi}$.\ We write $F_n \xrightarrow{\psi} F$ for convergence in this topology. 
A functional $\rho \colon \M_{1,c}([0,+\infty)) \to \R$ is said to be \emph{$\psi$-weakly continuous} if 
$
F_n \xrightarrow{\psi} F \ \Longrightarrow \ \rho(F_n) \to \rho(F) 
$
and \emph{$\psi$-weakly lower semicontinuous} if 
$
F_n \xrightarrow{\psi} F \ \Longrightarrow \liminf \rho(F_n) \geq \rho(F). 
$
\end{definition}

The next lemma clarifies the relationship between $\psi$-weak lower semicontinuity 
and the closure of the multiplicative acceptance set at the level of distributions, 
and shows that the topological requirement of Theorem~\ref{th:axiom} below is 
automatically satisfied by every generalized Orlicz premium. Further, it identifies exactly for which Orlicz functions the Orlicz premium is continuous from above. 

\begin{lemma}\label{lem:psi-lsc}
Let $\rho \colon \M_{1,c}([0,+\infty)) \to [0,+\infty)$ be a law-invariant 
functional, and let $\psi$ be a gauge function. 
Then:
\begin{enumerate}[label=\alph*), left=0pt, itemsep=0pt]
\item $\rho$ is $\psi$-weakly lower semicontinuous if and only if, for every 
$\gamma \in [0,+\infty)$, the sublevel set 
$\{F \in \M_{1,c}([0,+\infty)) \mid \rho(F) \leq \gamma\}$ is $\psi$-weakly closed.\ 
In particular, $\psi$-weak lower semicontinuity of $\rho$ implies that the 
multiplicative acceptance set $B_\rho$ is $\psi$-weakly closed.
\item Every generalized Orlicz premium $H_\Phi$ in the sense of 
Definition~\ref{def:orl} is weakly lower semicontinuous and hence $\psi$-weakly lower semicontinuous for every gauge 
function $\psi$.

\item Assume in addition that $\Phi$ is right-continuous at $0$ and $\Phi(+\infty)=+\infty$. Then $H_\Phi$ is continuous 
from above in the sense that 
$$
X_n \in L^\infty_+, X_n \downarrow X \Rightarrow H_\Phi(X_n) \to H_\Phi(X)
$$
if and only if $\Phi$ is strictly increasing on 
$\{x \geq 0 \mid \Phi(x)>0\}$.
\end{enumerate}
\end{lemma}

\smallskip

Lemma~\ref{lem:psi-lsc}\,a) is the multiplicative analogue of the equivalence used 
in \cite{W06} between $\psi$-weak lower semicontinuity of the underlying functional 
and $\psi$-weak closedness of the acceptance set in the cash-additive setting; 
the lower semicontinuity assumption of 
Theorem~\ref{th:axiom} below does not exclude any generalized Orlicz premium and 
in fact holds in the strongest form (with respect to the constant gauge 
$\psi \equiv 1$, i.e.,\ ordinary weak convergence). 
\smallskip

The following theorem is the main result of this section. It refines the 
axiomatization of Orlicz premia obtained in Theorem~2 of \cite{BLR18}, where 
the return risk measure was assumed to satisfy directly condition (3.1) of 
\cite{W06} and the characterization was established on $L^\infty_{++}$. 
Here we show that, once continuity from above is imposed, condition (3.1) needs 
no longer be assumed but follows from the other axioms, and the resulting 
identity $\rho=H_\Phi$ extends to the whole $L^\infty_+$. 

\begin{theorem}\label{th:axiom}
A functional $\rho \colon L_{+}^{\infty}\to [0,+\infty)$ is law-invariant, 
monotone, positive, positively homogeneous, normalized, and satisfies the 
CxLS property, weak lower semicontinuity and continuity from above if and 
only if there exists a generalized Orlicz function $\Phi$ in the sense of 
Definition~\ref{def:orl}, right-continuous at $0$, strictly increasing on 
$\{x \geq 0 : \Phi(x)>0\}$ and with $\Phi(+\infty)=+\infty$, such that 
$$\rho(X) = H_\Phi(X), \text{ for all } X \in L_+^\infty.$$
\end{theorem}
\smallskip 

\begin{remark}
Theorem~\ref{th:axiom} provides an if-and-only-if identification only for a 
proper subclass of the generalized Orlicz premia of Definition~\ref{def:orl}, 
namely those whose representing function $\Phi$ is strictly increasing on 
$\{\Phi>0\}$, right-continuous at $0$, and satisfies $\Phi(+\infty)=+\infty$. 
These restrictions 
guarantee  continuity 
from above and condition (3.1) of \cite{W06}.
Among the positive 
generalized Orlicz premia, the only one not covered by the theorem 
is the essential supremum, corresponding to the case $\Phi(0)=1$ of 
item~a) of Lemma~\ref{lem:positivity}. 
\end{remark}

Recalling that the CxLS property is a necessary condition for elicitability, we thus find that generalized Orlicz premia naturally arise as the only elicitable premium principles satisfying the assumptions of Theorem~\ref{th:axiom}. 
The study of families of loss functions consistent with Orlicz premia and their actuarial and financial applications is being pursued in a separate paper. 

\section{Concluding remarks}\label{sec:con}
Our analysis has shown that nonconvex Orlicz premia constitute a natural and 
far-reaching generalization of convex Orlicz premia.\ Our extended definition 
reveals a connection with various functionals, such as expectiles and other 
generalized quantiles, that has so far not been recognized in the literature. 
We have established that cash-additive generalized Orlicz premia do not collapse 
to the mean, contrary to conventional wisdom, but instead give rise to 
$L^{p}$-quantiles.\smallskip 

Further, relaxing convexity naturally leads to the consideration of 
concave-convex loss functions, or loss functions that are more convex than a 
reference loss function, as in the case of GA-convex Orlicz functions. 
Finally, we have shown that generalized Orlicz premia admit a complete axiomatic 
characterization: a natural subclass of them coincides with the law-invariant, 
monotone, positively homogeneous, positive, normalized functionals that are 
weakly lower semicontinuous, continuous from above, and enjoy the CxLS property. 
Since the CxLS property is necessary for elicitability, generalized Orlicz premia 
thus provide the natural framework for applying loss-based statistical methods to actuarial premium calculation. 

We believe that the generalized definition of Orlicz premia introduced in this 
paper should be adopted as the new canonical one.

\appendix
\renewcommand{\theequation}{\thesection.\Roman{equation}}
\setcounter{equation}{0}

\section{Proofs}

\begin{proof}[Proof of Lemma~\ref{lem:equiv-orl}]
Sufficiency of \eqref{eq:affine-equiv} is immediate, since $c>0$ gives $\E[\Phi_2(X/k)]\leq 1\iff\E[\Phi_1(X/k)]\leq 1$. 
For necessity, Proposition~\ref{prop:orl}, item c), below yields $H_{\Phi_i}(X)\leq 1\iff\E[\Phi_i(X)]\leq 1$, so $H_{\Phi_1}=H_{\Phi_2}$ implies that the affine functionals $L_i(\mu):=\int\Phi_i\,d\mu$ share the same sub-level set $\{L_i\leq 1\}$ on the convex set of laws of elements of $L^\infty_+$. 
Fixing $\nu$ with $b:=L_1(\nu)>1$ and writing $d:=L_2(\nu)$, every $\mu$ with $a:=L_1(\mu)<1$ admits a unique $\lambda^*=(b-1)/(b-a)\in(0,1)$ for which $\lambda^*\mu+(1-\lambda^*)\nu$ lies on the common boundary, forcing $\lambda^*L_2(\mu)+(1-\lambda^*)d=1$, and hence
\[
L_2(\mu)=c\,L_1(\mu)+(1-c),\qquad c=\frac{d-1}{b-1}>0.
\]
Specializing to $\mu=\delta_x$ yields the pointwise affine relation \eqref{eq:affine-equiv}.
\end{proof}
\smallskip

\begin{proof}[Proof of Lemma~\ref{lemma:vec}]
It is trivial to show closure of $L^\Phi$ with respect to 
scalar multiplication. 
Since $\Phi$ is concave with $\Phi(0)=0$, it follows that $\Phi(x)/x$ is nonincreasing, so 
$$
\frac{\Phi(x)}{x} \geq \frac{\Phi(x+y)}{x+y}, \quad
\frac{\Phi(y)}{y} \geq \frac{\Phi(x+y)}{x+y},
$$
which gives 
\[
\Phi(x+y) 
= \frac{x}{x+y}\,\Phi(x+y) + \frac{y}{x+y}\,\Phi(x+y) 
\leq \Phi(x) + \Phi(y).
\]
The inequality extends to $x=0$ or $y=0$ by $\Phi(0)=0$ and 
monotonicity. \\\smallskip
Let now $X, Y \in L^\Phi$, so there exist 
$k_X, k_Y > 0$ with $\E[\Phi(|X|/k_X)] \leq 1$ and 
$\E[\Phi(|Y|/k_Y)] \leq 1$. 
Set $k_0 = k_X + k_Y$. 
Using $|X+Y| \leq |X|+|Y|$, monotonicity of $\Phi$, subadditivity 
and $|X|/k_0 \leq |X|/k_X$ (analogously for $Y$) we get:
\[
\Phi\!\left(\frac{|X+Y|}{k_0}\right) 
\leq \Phi\!\left(\frac{|X|+|Y|}{k_0}\right) 
\leq \Phi\!\left(\frac{|X|}{k_0}\right) + \Phi\!\left(\frac{|Y|}{k_0}\right) 
\leq \Phi\!\left(\frac{|X|}{k_X}\right) + \Phi\!\left(\frac{|Y|}{k_Y}\right).
\]
Taking expectations yields
\[
\E\!\left[\Phi\!\left(\frac{|X+Y|}{k_0}\right)\right] \leq 2.
\]
To conclude, we show that $\E[\Phi(|X+Y|/k)] \leq 1$ for some 
$k > 0$. 
For every $k \geq k_0$, monotonicity of $\Phi$ gives 
$0 \leq \Phi(|X+Y|/k) \leq \Phi(|X+Y|/k_0)$, and the latter is 
integrable. 
Moreover, by right-continuity of $\Phi$ at $0$ and 
$\Phi(0)=0$,
\[
\Phi(|X+Y|/k) \to 0 \quad \text{pointwise as } k \to \infty.
\]
By monotone convergence, 
$\E[\Phi(|X+Y|/k)] \to 0$ as $k \to \infty$, so there exists 
$k^* > 0$ with $\E[\Phi(|X+Y|/k^*)] \leq 1$. 
Hence, 
$X+Y \in L^\Phi$.
\end{proof}
\smallskip

\begin{proof}[Proof of Proposition~\ref{prop:orl}]
\emph{a)} From the assumptions on $\Phi$, for $X \in L^\Phi$, the set 
$I_X:=\{ k>0 \mid \E[\Phi(X/k)] \leq 1 \}$ is a nonempty unbounded interval. 
Monotonicity of $H_\Phi$ follows from the monotonicity of $\Phi$ and the proof of positive homogeneity of $H_\Phi$ is standard. 
Notice that $H_\Phi(0)=0$ since  $\Phi(0) \leq 1$. 
Since $\E[\Phi(1/k)] \leq 1$ for $k \geq 1$ and $\E[\Phi(1/k)] > 1$ for $k<1$, we have $H_\Phi(1)=1$. 

\emph{b)} Let $g(k):=\E \left[\Phi(X/k) \right]$.\ Let $k \geq H_\Phi(X)$.
If $k_n \downarrow k$, then, from left-continuity of $\Phi$, it follows that $\Phi(X/k_n) \uparrow \Phi(X/k)$.\ 
From monotone convergence it follows that $g$ is right-continuous, so 
$
H_\Phi(X) = \inf \{ k \mid g(k) \leq 1 \} = \min \{ k \mid g(k) \leq 1 \}. 
$

\emph{c)} The `if' part follows immediately from Definition~\ref{def:orl}. 
For the `only if' part, we distinguish between two cases. 
If $H_\Phi(X)>0$, it follows from b). 
If $H_\Phi(X)=0$, there exist $k_n \downarrow 0$ such that $\E[\Phi(X/k_n)] \leq 1$, so, from monotonicity, $\E[\Phi(X)] \leq 1$.  

\emph{d)} Under these assumptions, the function $g(k)=\E \left[\Phi(X/k) \right]$ introduced in the proof of b) is continuous and strictly decreasing, from which the thesis follows. 

\emph{e)} The proof is standard. 

\emph{f)} Assume by contradiction that $\Phi$ is not midconvex, i.e., there exist $x_1,x_2 \in (0, +\infty)$ such that $$
b:=\Phi \left ((x_1+x_2)/2 \right )>(\Phi(x_1)+\Phi(x_2))/2:=a.
$$
We want to prove that 
there exist $z \in (0,+\infty)$ and $c=\Phi(z)$ such that
\begin{equation}\label{eq:conv-1}
\begin{cases}
\lambda c +(1-\lambda)b > 1 \\
\lambda c +(1-\lambda)a \leq 1
\end{cases}   
\end{equation}
for some $\lambda \in (0,1)$, or equivalently that 
\[
c \in I_\lambda := 
\left ( \frac{1-b(1-\lambda)}{\lambda}, \frac{1-a(1-\lambda)}{\lambda} \right ].
\]
There are three cases. 
If $a\leq 1<b$, then $c=1$ satisfies \eqref{eq:conv-1} for each $\lambda \in (0,1)$. 
If $a < b \leq 1$, then $z>1$ and $c=\Phi(z)$ satisfies \eqref{eq:conv-1} since any $c>1$ is contained in some $I_\lambda$, while if $1 \leq a < b$, then $z=u$ and $c=\Phi(u)$ satisfies \eqref{eq:conv-1} since any $c<1$ is contained in some $I_\lambda$. 
As a consequence, 
\begin{equation}\label{eq:convexity}
\lambda \Phi(z)+(1-\lambda)\Phi\left ((x_1+x_2)/2 \right )>1 \geq \lambda \Phi(z)+(1-\lambda) \frac{\Phi(x_1)+\Phi(x_2)}{2}.
\end{equation}
Let $A,B,C \in \F$ be disjoint sets with $P(A)=\lambda$, $P(B)=P(C)=\frac{1-\lambda}{2}$ and let
\begin{align*}
X&=z 1_A+x_1 1_B+ x_2 1_C\\
Y&=z 1_A+x_2 1_B+ x_1 1_C\\
Z&=z 1_A+\frac{x_1+x_2}{2} 1_{B\cup C}=\frac{X+Y}{2}.
\end{align*}
From \eqref{eq:convexity}, we have $\E[\Phi(Z)]>1 \Rightarrow H_\Phi(Z) > 1$ and $\E[\Phi(X)] \leq 1 \Rightarrow H_\Phi (X) \leq 1$, $\E[\Phi(Y)] \leq 1 \Rightarrow H_\Phi (Y) \leq 1$, which contradicts the convexity of $H_{\Phi}$. 
As a consequence, $\Phi$ is midconvex, and since it is also nondecreasing, it is convex.

\emph{g)} Follows immediately from the definition. 
\end{proof}
\smallskip

\begin{proof}[Proof of Lemma~\ref{lem:positivity}]
Let $X\in L^\infty_+$ with $p:=P(X>0)\in(0,1]$. 
By monotone 
convergence,
\[
\lim_{k\downarrow 0}\E[\Phi(X/k)]
=\Phi(+\infty)\,p+\Phi(0)\,(1-p).
\]
\emph{a)} If $\Phi(0)=1$, then $\Phi\equiv 1$ on $[0,1]$ by item~a) of its definition
and monotonicity, while $\Phi>1$ on $(1,+\infty)$. 
Hence, 
$\E[\Phi(X/k)]\geq 1$ for every $k>0$, with equality if and only if 
$X/k\leq 1$ a.s., i.e., if $k\geq\esssup X$. 
This gives 
$H_\Phi(X)=\esssup X$, which is strictly positive whenever 
$P(X>0)>0$. 

\emph{b)} If $\Phi(0)<1$ and $\Phi(+\infty)=+\infty$, the limit 
above is $+\infty$, so $\E[\Phi(X/k_0)]>1$ for some $k_0>0$, 
yielding $H_\Phi(X)\geq k_0>0$.

\emph{c)} If $\Phi(0)<1$ and $M:=\Phi(+\infty)<+\infty$, take  $X:=\mathbf 1_A$ with 
$P(A)=p \leq (1-\Phi(0))/(M-\Phi(0))$. 
Then, for every $k>0$,
\[
\E[\Phi(X/k)]\leq p\,M+(1-p)\,\Phi(0)\leq 1,
\]
so $H_\Phi(X)=0$ with $P(X>0)=p>0$, showing that $H_\Phi$ is not positive. 
\end{proof}
\smallskip

\begin{proof}[Proof of Theorem~\ref{th:expectiles}]
For the `if' part of a), in Example \ref{ex:lp-quantiles} it has been shown that in this case $H_\Phi$ is cash-additive. 
For the `if' part of b), in Example \ref{ex:expectiles} it has been shown that $H_\Phi$ is cash-additive and convex (concave) if $\alpha \geq (\leq) 1/2$. \\
We now prove the `only if' part. 
Let $x_1 < 1$ and $x_2>1$. 
From the assumptions on $\Phi$, there exists $p$ such that
$
p \Phi(x_1) +(1-p) \Phi(x_2) = 1,
$
and from Proposition~\ref{prop:orl} it follows that $H_\Phi(X)=1$, where 
$X=x_1 1_A + x_2 1_{A^c}$ with $P(A)=p$.\ 
From positive homogeneity and cash-additivity, it follows that 
$
H_\Phi(cX - c + 1) = c - c + 1 = 1 \text { for any } c \geq 0,
$
so again from Proposition~\ref{prop:orl} it follows that 
$
p \Phi(c(x_1-1)+1) +(1-p) \Phi(c(x_2-1)+1) = 1, \text { for any } c \geq 0. 
$
Let $u:=x_1-1 < 0$ and $v:=x_2-1>0$. 
The previous arguments have shown that if
\begin{equation}\label{eq:p}
pf(u)+(1-p)g(v)=0,
\end{equation}
then 
\begin{equation*}
pf(cu)+(1-p)g(cv)=0,  \text { for any } c \geq 0, 
\end{equation*}
where $f$ and $g$ denote the restrictions of the function $h(x):=\Phi(x+1)-1$ to the domains $(-1,0)$ and $(0,+ \infty)$, respectively. 
Since from the assumptions on $\Phi$ it follows that for any $u \in (-1,0)$ and $v \in (0,+\infty)$ it is possible to find $p$ such that \eqref{eq:p} holds, we find that for each $u$ and $v$, 
\[
\frac{f(u)}{g(v)} = \frac{f(cu)}{g(cv)}, \text { for any } c > 0, 
\]
which can be recast into a multiplicative Pexider functional equation (see, e.g.,\ \cite{A66}, Theorem~4 in Section~3.1), whose general solution is
\[
\begin{cases}
f(u)= -a (-u)^p &\text{if } u \in (-1,0),\\[2pt]
g(v)= b v^p &\text{if } v \in (0,+\infty),
\end{cases}
\]
with $a>0$, $b>0$ and $p\geq 0$. 
The case $p=0$ would force $f$ and $g$ to
be constant on their respective domains, contradicting the strict
monotonicity of $\Phi$; hence $p>0$. 
Going back to $\Phi$ via $\Phi(x)=1+h(x)$
we obtain
\begin{equation}\label{eq:Phi-prenorm}
\Phi(x)\;=\;1\;+\;b\,(x-1)_+^p\;-\;a\,(x-1)_-^p,
\end{equation}
with $a=1-\Phi(0)\in(0,1]$ and $b=\Phi(2)-1>0$.
The rescaling
\[
\Phi\;\longmapsto\;\widetilde\Phi\;:=\;1+\tfrac{1}{a+b}\,(\Phi-1)
\]
as in Lemma \ref{lem:equiv-orl} leaves $H_\Phi$ unchanged.
Replacing $\Phi$ with $\widetilde\Phi$ amounts to imposing $a+b=1$ in
\eqref{eq:Phi-prenorm}, and setting $\alpha:=b\in(0,1)$ we obtain a).
To prove b), notice that convexity or concavity holds if and only if $p=1$, and is determined by the inequality between $a$ and $b$. 
\end{proof}
\smallskip

\begin{proof}[Proof of Proposition~\ref{prop:conv-concave-orl}]
\emph{a)} Under the assumptions, $\Phi$ is finite, continuous and strictly increasing on $[0,+\infty)$, hence d) of Proposition~\ref{prop:orl} applies and $k = H_\Phi(X)$ is the unique solution of $\E[\Phi(X/k)] = 1$. 
Expanding,
\[
\E[\Phi(X/k)] - 1 = \alpha\,\E\!\left[\Phi_1\!\left(\tfrac{(X-k)_+}{k}\right)\right] - (1-\alpha)\,\E\!\left[\Phi_2\!\left(\tfrac{(k-X)_+}{k}\right)\right],
\]
which gives~\eqref{eq:foc-cc}.

\smallskip
\emph{b)} Introduce the auxiliary function 
\[
G_{\alpha,\Phi_1,\Phi_2}(X,t) := \alpha\,\E\!\left[\Phi_1\!\left(\tfrac{(X-t)_+}{t}\right)\right] - (1-\alpha)\,\E\!\left[\Phi_2\!\left(\tfrac{(t-X)_+}{t}\right)\right], \qquad t>0,
\] 
so that~\eqref{eq:foc-cc} reads $G_{\alpha,\Phi_1,\Phi_2}(X,k)=0$. For fixed $X$ and parameters, $t\mapsto G_{\alpha,\Phi_1,\Phi_2}(X,t)$ is strictly decreasing, since $t\mapsto (X-t)_+/t$ is nonincreasing, $t\mapsto (t-X)_+/t$ is nondecreasing, and $\Phi_1, \Phi_2$ are strictly increasing. 
Hence, writing $k = H_{\alpha,\Phi_1,\Phi_2}(X)$ and $\tilde k = H_{\alpha,\tilde\Phi_1,\tilde\Phi_2}(X)$, we have $k \leq \tilde k$ if and only if $G_{\alpha,\tilde\Phi_1,\tilde\Phi_2}(X,k) \geq 0$.

(i) If $\Phi_1\leq \tilde\Phi_1$, then at $t=k$,
\[
G_{\alpha,\tilde\Phi_1,\Phi_2}(X,k) \geq G_{\alpha,\Phi_1,\Phi_2}(X,k) = 0,
\]
yielding $H_{\alpha,\Phi_1,\Phi_2}(X)\leq H_{\alpha,\tilde\Phi_1,\Phi_2}(X)$. 
Symmetrically, (ii) follows by the opposite inequality on the $\Phi_2$-term.

(iii) For $\alpha < \tilde\alpha$, at $t=k$,
\[
G_{\tilde\alpha,\Phi_1,\Phi_2}(X,k) - G_{\alpha,\Phi_1,\Phi_2}(X,k) = (\tilde\alpha-\alpha)\left(\E\!\left[\Phi_1\!\left(\tfrac{(X-k)_+}{k}\right)\right] + \E\!\left[\Phi_2\!\left(\tfrac{(k-X)_+}{k}\right)\right]\right) \geq 0,
\]
and the inequality is strict whenever $P(X\neq k)>0$, since $\Phi_1, \Phi_2$ are strictly increasing with $\Phi_i(0)=0$. 
Therefore, $G_{\tilde\alpha,\Phi_1,\Phi_2}(X,k) \geq 0$, i.e., $H_{\tilde\alpha,\Phi_1,\Phi_2}(X)\geq k$, with strict inequality under the nondegeneracy condition.

\smallskip
\emph{c)} By the assumption $X\preceq_{\Phi_1,\Phi_2} Y$, at $t=k:=H_{\alpha,\Phi_1,\Phi_2}(X)$,
\[
\E\!\left[\Phi_1\!\left(\tfrac{(Y-k)_+}{k}\right)\right] \geq \E\!\left[\Phi_1\!\left(\tfrac{(X-k)_+}{k}\right)\right], \qquad 
\E\!\left[\Phi_2\!\left(\tfrac{(k-Y)_+}{k}\right)\right] \leq \E\!\left[\Phi_2\!\left(\tfrac{(k-X)_+}{k}\right)\right].
\]
Multiplying by $\alpha>0$ and $-(1-\alpha)<0$ and summing, 
\[
G_{\alpha,\Phi_1,\Phi_2}(Y,k) \geq G_{\alpha,\Phi_1,\Phi_2}(X,k) = 0,
\]
which, by monotonicity of $G_{\alpha,\Phi_1,\Phi_2}(Y,\cdot)$, gives $H_{\alpha,\Phi_1,\Phi_2}(Y) \geq k$. 
\end{proof}
\smallskip

\begin{proof}[Proof of Corollary~\ref{cor:lpq-order}]
\emph{a)} Follows directly from~c) of Proposition~\ref{prop:conv-concave-orl} applied to $\Phi_1(x)=x^p$ and $\Phi_2(x)=x^q$, using homogeneity of the power functions to absorb the factor $1/t$: $\Phi_1((x-t)_+/t) = t^{-p}(x-t)_+^p$, and analogously for $\Phi_2$, so the inequalities in the definition of $\preceq_{\Phi_1,\Phi_2}$ at level $t$ reduce to those in the definition of $\preceq_{p,q}$.

\emph{b)} For $p=q$, the two conditions
\[
\E[(X-t)_+^p] \leq \E[(Y-t)_+^p], \qquad \E[(t-X)_+^p] \geq \E[(t-Y)_+^p],
\]
are exactly the two families of inequalities defining the $p$-convex order in~\cite{B12}. 
\end{proof}
\smallskip

\begin{proof}[Proof of Proposition~\ref{prop:GA-cx}]
\textit{a)} By the AM-GM inequality, for every $x,y \in [0,+\infty)$ and 
$\lambda \in (0,1)$ we have
$
x^\lambda y^{1-\lambda} \leq \lambda x + (1-\lambda) y.
$
Since $f$ is nondecreasing, 
$f(x^\lambda y^{1-\lambda}) \leq f(\lambda x + (1-\lambda) y)$; 
since $f$ is convex,
$f(\lambda x + (1-\lambda) y) \leq \lambda f(x) + (1-\lambda) f(y)$.
Chaining the two inequalities yields the GA-convexity of $f$.

\textit{b)} Define $g\colon \R \to (-\infty,+\infty)$ by $g(u) := f(e^u)$.

\noindent ($\Rightarrow$) Let $u,v \in \R$, $\lambda \in (0,1)$, and set 
$x = e^u, y = e^v \in (0,+\infty)$. 
Since $e^{\lambda u + (1-\lambda) v} = x^\lambda y^{1-\lambda}$, the GA-convexity of $f$ gives
$$
g(\lambda u + (1-\lambda) v) 
= f(x^\lambda y^{1-\lambda}) 
\leq \lambda f(x) + (1-\lambda) f(y) 
= \lambda g(u) + (1-\lambda) g(v),
$$
hence $g$ is convex on $\R$.

\noindent ($\Leftarrow$) Conversely, assume $g$ convex on $\R$. 
For $x,y \in (0,+\infty)$ and $\lambda \in (0,1)$, set $u = \log x$, $v = \log y$. 
Convexity of $g$ gives
\[
f(x^\lambda y^{1-\lambda}) 
= g(\lambda u + (1-\lambda) v) 
\leq \lambda g(u) + (1-\lambda) g(v) 
= \lambda f(x) + (1-\lambda) f(y),
\]
which is the GA-convexity of $f$ on $(0,+\infty)$.

\textit{c)} By b), $f$ is GA-convex if and only if $g(u) := f(e^u)$ is convex on $\R$. 
Since $f \in C^2[0,+\infty)$, also $g \in C^2(\R)$, and convexity of $g$ is equivalent to 
$g''(u) \geq 0$ for every $u \in \R$. 
A direct computation gives
\[
g'(u) = e^u f'(e^u), \,
g''(u) = e^u f'(e^u) + e^{2u} f''(e^u).
\]
Setting $x = e^u > 0$, this can be rewritten as
\[
g''(u) = x f'(x) + x^2 f''(x) = x \bigl(f'(x) + x f''(x)\bigr).
\]
Since $x>0$, the condition $g''(u) \geq 0$ for all $u \in \R$ is equivalent to 
$x f''(x) + f'(x) \geq 0$ for all $x \in (0,+\infty)$, 
and extends to $x=0$ by continuity of $f'$ and $f''$ on $[0,+\infty)$.
\textit{d)} Since $f'(x) > 0$ for every $x \in (0,+\infty)$, 
dividing the inequality $x f''(x) + f'(x) \geq 0$ established in c) by $x f'(x) > 0$ 
yields, for $x>0$,
\[
\frac{f''(x)}{f'(x)} + \frac{1}{x} \geq 0,
\]
which is exactly~\eqref{eq:comp-conv}. 
The converse implication is obtained by 
multiplying \eqref{eq:comp-conv} by $x f'(x)>0$ and applying c).
\end{proof}
\smallskip

\begin{proof}[Proof of Proposition~\ref{prop:alg-conv}]
\textit{a)} The argument is the same as in Proposition~\ref{prop:GA-cx}\,a):
the AM--GM inequality $x^\lambda y^{1-\lambda} \le \lambda x + (1-\lambda)y$, 
together with the monotonicity and the convexity of $\Phi$, yields
$\Phi(x^\lambda y^{1-\lambda}) \le \Phi(\lambda x + (1-\lambda) y) \le \lambda \Phi(x) + (1-\lambda)\Phi(y)$
for all $x,y \in [0,+\infty)$ and $\lambda \in (0,1)$.

\textit{b)} Let $X, Y \in L^\infty_+$ and $\lambda \in (0,1)$.
If $\rho(X) = +\infty$ or $\rho(Y) = +\infty$, then by the convention 
$0\cdot(+\infty)=+\infty$ adopted in Definition~\ref{def:GG-GA-2} we have 
$\rho(X)^\lambda \rho(Y)^{1-\lambda} = +\infty$, so the GG-convexity inequality 
is trivially satisfied. 
Assume henceforth $\rho(X), \rho(Y) < +\infty$.

For every $\alpha, \beta > 0$, the pointwise AM--GM inequality applied to 
$X(\omega)/\alpha \ge 0$ and $Y(\omega)/\beta \ge 0$ gives
\[
\left(\frac{X}{\alpha}\right)^{\!\lambda} \!\left(\frac{Y}{\beta}\right)^{\!1-\lambda}
\;\le\; \lambda \, \frac{X}{\alpha} + (1-\lambda) \, \frac{Y}{\beta} 
\qquad P\text{-a.s.,}
\]
or equivalently 
$X^\lambda Y^{1-\lambda} \le \alpha^\lambda \beta^{1-\lambda}\!\left(\lambda X/\alpha + (1-\lambda) Y/\beta\right)$.
Using monotonicity, then positive homogeneity (extracting the constant $\alpha^\lambda \beta^{1-\lambda}>0$), 
then convexity (with weights $\lambda, 1-\lambda$ that sum to $1$), and finally positive homogeneity once more,
\begin{align*}
\rho(X^\lambda Y^{1-\lambda}) 
&\le \rho\!\left(\alpha^\lambda \beta^{1-\lambda}\left(\lambda \tfrac{X}{\alpha} + (1-\lambda)\tfrac{Y}{\beta}\right)\right)\\
&= \alpha^\lambda \beta^{1-\lambda} \, \rho\!\left(\lambda \tfrac{X}{\alpha} + (1-\lambda)\tfrac{Y}{\beta}\right) \\
&\le \alpha^\lambda \beta^{1-\lambda} \!\left(\lambda \, \rho\!\left(\tfrac{X}{\alpha}\right) + (1-\lambda)\,\rho\!\left(\tfrac{Y}{\beta}\right)\right) \\
&= \lambda \, \alpha^{\lambda-1}\beta^{1-\lambda} \rho(X) + (1-\lambda)\, \alpha^\lambda \beta^{-\lambda} \rho(Y).
\end{align*}
We have therefore proved that, for every $\alpha,\beta>0$,
\begin{equation} \label{eq:gg-bound}
\rho(X^\lambda Y^{1-\lambda}) 
\;\le\; \lambda \, \alpha^{\lambda-1}\beta^{1-\lambda} \rho(X) + (1-\lambda)\, \alpha^\lambda \beta^{-\lambda} \rho(Y).
\end{equation}

\noindent\emph{Case 1: $\rho(X)>0$ and $\rho(Y)>0$.} 
Setting $\alpha = \rho(X)$ and $\beta = \rho(Y)$ in \eqref{eq:gg-bound} yields 
\[
\rho(X^\lambda Y^{1-\lambda}) 
\;\le\; \lambda \, \rho(X)^\lambda \rho(Y)^{1-\lambda} + (1-\lambda)\, \rho(X)^\lambda \rho(Y)^{1-\lambda} 
\;=\; \rho(X)^\lambda \rho(Y)^{1-\lambda}.
\]

\noindent\emph{Case 2: $\rho(X) = 0$ (the case $\rho(Y) = 0$ is symmetric).} 
Setting $\beta = 1$ in \eqref{eq:gg-bound}, the first summand on the right-hand side is identically zero (because $\rho(X) = 0$) and the second is $(1-\lambda)\,\alpha^\lambda \rho(Y)$. 
Letting $\alpha \downarrow 0$ gives 
$\rho(X^\lambda Y^{1-\lambda}) \le 0$ and hence $\rho(X^\lambda Y^{1-\lambda}) = 0 = \rho(X)^\lambda \rho(Y)^{1-\lambda}$.

In both cases $\rho(X^\lambda Y^{1-\lambda}) \le \rho(X)^\lambda \rho(Y)^{1-\lambda}$, so $\rho$ is GG-convex.
\end{proof}
\smallskip

\begin{proof}[Proof of Lemma~\ref{lem:cxls}]
From the monotonicity of $\Phi$, 
$\E[\Phi(W/k)] \leq 1$  for every $k > H_\Phi(W)$ and by definition $\E[\Phi(W/k)] > 1$ for every $k < H_\Phi(W)$.
Since $Z \sim \lambda F + (1-\lambda) G$, for each $k>0$
\[
\E[\Phi(Z/k)] = \lambda\,\E[\Phi(X/k)] + (1-\lambda)\,\E[\Phi(Y/k)].
\]
For every $k > \gamma$, both $\E[\Phi(X/k)]$ and $\E[\Phi(Y/k)]$ are at most $1$, hence $\E[\Phi(Z/k)] \leq 1$, which gives $H_\Phi(Z) \leq \gamma$. If $\gamma=0$ the thesis follows immediately. 
If $\gamma > 0$, then for every $k \in (0,\gamma)$ both expectations are strictly greater than $1$, hence $\E[\Phi(Z/k)] > 1$, which gives $H_\Phi(Z) \geq \gamma$. Therefore, $H_\Phi(Z) = \gamma$.
\end{proof}\smallskip

\begin{proof} [Proof of Proposition~\ref{prop:gg-GA}]
We first prove the `if' part.\ 
Let $X,Y\in L_{+}^{\infty}$ and $\lambda \in (0,1)$.
From the GA-convexity of $\Phi$ it follows that
\begin{align*}
&\E\sqb{\Phi\of{\of{\frac{X}{H_\Phi(X)}}^{\lambda}\of{\frac{Y}{H_\Phi(Y)}}^{1-\lambda}}} \\
&\leq  \lambda \E\sqb{\Phi\of{\frac{X}{H_\Phi(X)}}}+(1-\lambda)\E\sqb{\Phi\of{\frac{Y}{H_\Phi(Y)}}}  \leq 1,
\end{align*}
which from Proposition~\ref{prop:orl} implies
\[
H_\Phi \left ( \frac{X^{\lambda}Y^{1-\lambda}}{H_\Phi(X)^{\lambda}H_\Phi(Y)^{1-\lambda}} \right) \leq 1,
\]
which from positive homogeneity gives
\[
H_\Phi \of{X^{\lambda}Y^{1-\lambda}}\leq H_\Phi (X)^{\lambda}H_\Phi (Y)^{1-\lambda}.
\]
To prove the `only if' part, we first assume by contradiction that $\Phi$ is not GA-midconvex, i.e., there exist $x_1,x_2 \geq 0$ with $\Phi(x_1)<+\infty$ and $\Phi(x_2) < +\infty$ such that $\Phi \left (\sqrt{x_1 x_2} \right )>(\Phi(x_1)+\Phi(x_2))/2.$ 
Then, reasoning as in the proof of Proposition~\ref{prop:orl} item (f), there exist $z \in [0,+\infty) $ and $\lambda \in (0,1)$ such that
\begin{equation}\label{eq:GA-midconvexity}
\lambda \Phi(z)+(1-\lambda)\Phi(\sqrt{x_1 x_2})>1>\lambda \Phi(z)+(1-\lambda) \frac{\Phi(x_1)+\Phi(x_2)}{2}.
\end{equation}
Take disjoint sets $A,B,C \in \F$ with $P(A)=\lambda$, $P(B)=P(C)=\frac{1-\lambda}{2}$ and let
\begin{align*}
X&=z 1_A+x_1 1_B+ x_2 1_C\\
Y&=z 1_A+x_2 1_B+ x_1 1_C\\
Z&=z 1_A+\sqrt{x_{1}x_{2}}1_{B\cup C}=\sqrt{XY}.
\end{align*}
From \eqref{eq:GA-midconvexity}, we have $\E[\Phi(Z)]>1$ and $\E[\Phi(X)]=\E[\Phi(Y)]<1$, which contradicts with the geometric convexity of $H_{\Phi}$.\ 
As a consequence, $\Phi$ is GA-midconvex and since it is also nondecreasing, the thesis follows.  
\end{proof}
\smallskip

\begin{proof}[Proof of Corollary~\ref{cor:GG-CI}]
By Theorem~\ref{th:expectiles}, item a), cash-additivity of $H_\Phi$
implies that
\begin{equation}\label{eq:cor21-form}
\Phi(x)\;\sim\;1+\alpha(x-1)_+^{p}-(1-\alpha)(x-1)_-^{p},
\qquad \alpha\in(0,1),\ p>0.
\end{equation}
Since $\Phi(0)=\alpha<1$, the hypothesis of Proposition~\ref{prop:gg-GA} 
is satisfied with $u=0$; hence GG-convexity of $H_\Phi$ forces $\Phi$ to be 
GA-convex. 
By Proposition~\ref{prop:GA-cx}, item b), the function 
$g\colon\R\to\R$ defined by $g(x):=\Phi(e^{x})$ is convex on~$\R$, where
\begin{equation}\label{eq:cor21-g}
g(x)=
\begin{cases}
1+\alpha(e^{x}-1)^{p}, & x\geq 0,\\[2pt]
1-(1-\alpha)(1-e^{x})^{p}, & x<0.
\end{cases}
\end{equation}

\smallskip
\noindent
A direct computation gives, for $x\neq 0$,
\[
g''(x)=
\begin{cases}
\alpha\,p\,e^{x}(e^{x}-1)^{p-2}\bigl[pe^{x}-1\bigr], & x>0,\\[2pt]
(1-\alpha)\,p\,e^{x}(1-e^{x})^{p-2}\bigl[1-pe^{x}\bigr], & x<0.
\end{cases}
\]
On $(0,+\infty)$, 
convexity requires $pe^{x}-1\geq 0$ for all $x>0$; letting $x\downarrow 0$ 
yields $p\geq 1$.\ On $(-\infty,0)$, convexity requires $1-pe^{x}\geq 0$ for all $x<0$; letting 
$x\uparrow 0$ yields $p\leq 1$.\ Hence $p=1$, so 
\eqref{eq:cor21-g} becomes
\[
g(x)=
\begin{cases}
1-\alpha+\alpha\,e^{x}, & x\geq 0,\\[2pt]
\alpha+(1-\alpha)\,e^{x}, & x<0,
\end{cases}
\]
which is piecewise smooth with right derivative $g'(0^{+})=\alpha$ and 
left derivative $g'(0^{-})=1-\alpha$.\ Convexity of $g$ at the kink 
requires $g'(0^{-})\leq g'(0^{+})$, i.e.,\ $1-\alpha\leq\alpha$, hence 
$\alpha\geq 1/2$.
The concave case is analogous: geometric concavity of $H_\Phi$ implies, 
by the symmetric counterpart of Proposition~\ref{prop:gg-GA}, that $\Phi$ 
is GA-concave; equivalently, $g(x)=\Phi(e^{x})$ is concave on $\R$.\ 
Reversing the inequalities gives $p=1$ and 
$\alpha\leq 1/2$, hence $H_\Phi(X)=e_\alpha(X)$ with $\alpha\leq 1/2$.
\end{proof}
\smallskip

\begin{proof}[Proof of Lemma~\ref{lem:mconv}]
Let $F, G \in B_\rho$ with $\rho(F) = \alpha$ and $\rho(G) = \beta$, so $\alpha, \beta \in [0,1]$, and assume without loss of generality that $\alpha \geq \beta$. 
Fix $\lambda \in (0,1)$. Since $(\Omega, \F, P)$ is nonatomic, we can find $X \sim F$, $Y \sim G$ and $A \in \F$ with $P(A) = \lambda$ that are mutually independent. 
The random variable
$
Z := X \mathbf{1}_A + Y \mathbf{1}_{A^c}
$
has distribution $\lambda F + (1-\lambda) G$. There are now two cases. 
i) if $\beta > 0$, set $c := \alpha/\beta \geq 1$ and $Y' := c Y$. Positive homogeneity gives $\rho(Y') = c \beta = \alpha$, so if $G'$ denotes the law of $Y'$, then $\rho(F) = \rho(G') = \alpha$ and the CxLS property yields
\[
\rho\big( \lambda F + (1-\lambda) G' \big) = \alpha.
\]
The random variable $Z' := X \mathbf{1}_A + Y' \mathbf{1}_{A^c}$ has distribution $\lambda F + (1-\lambda) G'$, and since $c \geq 1$ and $Y \geq 0$, we have $Y' \geq Y$, hence $Z' \geq Z$. 
By monotonicity,
$
\rho(Z) \leq \rho(Z') = \alpha \leq 1,
$
so $\lambda F + (1-\lambda) G \in B_\rho$. ii) if  $\beta = 0$, positivity of $\rho$ forces $Y = 0$, so $Z = X \mathbf{1}_A \leq X$ a.s.\ and monotonicity yields $\rho(Z) \leq \rho(X) = \alpha \leq 1$. 

For the mixture-convexity of $B_\rho^c$, let $F, G \in B_\rho^c$, so $\rho(F) = \alpha > 1$ and $\rho(G) = \beta > 1$, and assume without loss of generality that $\alpha \leq \beta$. Set $c := \alpha/\beta \in (0,1]$ and $Y' := cY$.\ With the same construction as above, we now have $Y' \leq Y$, hence $Z' \leq Z$.\ By CxLS, $\rho(\lambda F + (1-\lambda) G') = \alpha$, and monotonicity gives $\rho(Z) \geq \rho(Z') = \alpha > 1$.
\end{proof}
\smallskip

\begin{proof}[Proof of Lemma~\ref{lem:psi-lsc}]
\emph{a)} The implication ``$\Rightarrow$'' is standard: if $\rho$ is $\psi$-weakly 
lsc and $(F_n) \subseteq \{\rho \leq \gamma\}$ converges $\psi$-weakly to $F$, then 
$\rho(F) \leq \liminf_n \rho(F_n) \leq \gamma$.\ For the converse, recall that the 
$\psi$-weak topology on $\M_{1,c}([0,+\infty))$ is metrizable (cf.\ \cite{FS16}), 
so the sequential definition of lower semicontinuity suffices.\ Let 
$F_n \xrightarrow{\psi} F$ and set $c := \liminf_n \rho(F_n)$.\ For every 
$\gamma > c$, a subsequence $(F_{n_k})$ satisfies $\rho(F_{n_k}) \leq \gamma$, 
hence $F$ belongs to the closed sublevel set $\{\rho \leq \gamma\}$.\ Letting 
$\gamma \downarrow c$ yields $\rho(F) \leq c$.\ The case $\gamma = 1$ gives the 
closedness of $B_\rho$.

\emph{b)} Since the $\psi$-weak topology is finer than the weak topology, it is 
enough to show that $H_\Phi$ is lower semicontinuous with respect to weak 
convergence in $\M_{1,c}([0,+\infty))$.\ Let $F_n \to F$ weakly and set 
$c := \liminf_n H_\Phi(F_n)$, which we may assume to be finite.\ Fix $k > c$; along 
a subsequence $(n_j)$, $H_\Phi(F_{n_j}) \leq k$, so by item c) of 
Proposition~\ref{prop:orl} applied to $X_{n_j}/k$ together with the positive 
homogeneity of $H_\Phi$,
$
\E[\Phi(X_{n_j}/k)] \leq 1,
$
where $X_{n_j} \sim F_{n_j}$.\ By Skorohod's representation theorem, we may assume 
that $X_{n_j}$ and $X$ are defined on a common probability space, $X \sim F$, and 
$X_{n_j} \to X$ almost surely.\ Since $\Phi$ is nondecreasing and left-continuous, 
for every sequence $y_n \to y$ in $[0,+\infty)$ one has 
$\liminf_n \Phi(y_n) \geq \Phi(y)$;\footnote{Indeed, given $\varepsilon>0$, by 
left-continuity there is $\delta>0$ such that $\Phi(y') > \Phi(y) - \varepsilon$ 
for $y' \in (y-\delta, y]$, while $\Phi(y') \geq \Phi(y)$ for $y' \geq y$ by 
monotonicity.\ Hence $\Phi(y_n) > \Phi(y) - \varepsilon$ eventually, and the claim 
follows by letting $\varepsilon \downarrow 0$.} combined with Fatou's lemma and the 
nonnegativity of $\Phi$, this gives
\[
\E[\Phi(X/k)] \leq \E\Big[\liminf_j \Phi(X_{n_j}/k)\Big] \leq 
\liminf_j \E[\Phi(X_{n_j}/k)] \leq 1,
\]
so $H_\Phi(F) \leq k$.\ Letting $k \downarrow c$ yields $H_\Phi(F) \leq c$.

\emph{c)}
\emph{Sufficiency.} Assume $\Phi$ is strictly increasing on $\{\Phi>0\}$. 
Let 
$D$ be the (at most countable) set of discontinuity points of $\Phi$ in 
$(0,+\infty)$. 
By monotonicity, $H_\Phi(X_n)$ is nonincreasing with 
$H_\Phi(X_n)\geq H_\Phi(X)$; put $L:=\lim_n H_\Phi(X_n)\geq H_\Phi(X)$.

\emph{Step 1.} For every $k>H_\Phi(X)$, we have $g_X(k):=\E[\Phi(X/k)]<1$. 
Indeed, 
fix $k'\in(H_\Phi(X),k)$, so $g_X(k')\le 1$ and $g_X(k)\le 1$ by monotonicity. 
If 
$g_X(k)=1$, then $g_X\equiv1$ on $[k',k]$. 
With $x_0:=\sup\{x:\Phi(x)=0\}$ and 
$E:=\{X/k'>x_0\}$, on $E$ we have $\Phi(X/k')>0$ and $X/k<X/k'$, so strict 
monotonicity on $\{\Phi>0\}$ gives $\Phi(X/k)<\Phi(X/k')$; off $E$, 
$\Phi(X/k')=0$. 
If $P(E)>0$, then $g_X(k)<g_X(k')$, a contradiction; if $P(E)=0$ 
then $g_X(k')=0\neq1$, absurd. 
Hence, $g_X(k)<1$.

\emph{Step 2.} Since $X$ has at most countably many atoms, the set 
$\{k>0 : P(X/k\in D)>0\}=\{k>0: P(X\in kD)>0\}$ is at most countable; pick 
$k>H_\Phi(X)$ outside it. 
Then $P(X/k\in D)=0$, and since $\Phi$ is 
right-continuous at $0$ and left-continuous everywhere, $X_n/k\downarrow X/k$ 
implies $\Phi(X_n/k)\downarrow\Phi(X/k)$ a.s.\ (on $\{X=0\}$ use right-continuity 
at $0$; on $\{X>0\}$ use continuity of $\Phi$ at $X/k\notin D$). 
By dominated 
convergence, $\E[\Phi(X_n/k)]\to g_X(k)<1$, so $H_\Phi(X_n)\le k$ eventually and 
$L\le k$. 
Letting $k\downarrow H_\Phi(X)$ along such $k$ gives $L=H_\Phi(X)$.

\emph{Necessity.} Suppose $\Phi$ is constant, equal to some $c>0$, on an interval 
$[a,b]$ with $0<a<b$. We exhibit $X_n\downarrow X$ with $H_\Phi(X_n)\not\to 
H_\Phi(X)$. Since $\Phi(0)\leq 1<\Phi(x)$ for $x>1$, we may choose $u\geq 0$ and 
$p\in(0,1)$ with $p\,c+(1-p)\Phi(u)=1$: if $c<1$ take $u>1$, if $c\ge 1$ take $u$ 
with $\Phi(u)=\Phi(0)<1$ (possible as $\Phi(0)<1$ in the strictly increasing regime; 
if $c=1$ take $u=0$). Set $X:=a\,\mathbf 1_{A}+u\,\mathbf 1_{A^c}$ with $P(A)=p$. 
Then $g_X(1)=1$, and for $k\in[a/b,1]$ one has $X/k\in[a,b]$ on $A$, hence 
$\Phi(X/k)=c$ there; choosing the free scale so that the off-$A$ contribution stays 
balanced, $g_X\equiv 1$ on a nondegenerate interval $[k_-,1]$ with $k_-<1$, so 
$H_\Phi(X)=k_-<1$. 
Finally, let $X_n:=X+M\,\mathbf 1_{B_n}$ with $B_n\subseteq A^c$, 
$B_n\downarrow\emptyset$, $P(B_n)=1/n$, and $M$ so large that 
$\Phi(M)/n>\Phi(u)/n$ makes $g_{X_n}(k)>1$ for all $k\in[k_-,1]$. 
Then 
$X_n\downarrow X$ a.s.\ but $H_\Phi(X_n)\geq 1>k_-=H_\Phi(X)$, so $H_\Phi(X_n)\not\to 
H_\Phi(X)$.
\end{proof}
\medskip

\begin{proof}[Proof of Theorem~\ref{th:axiom}]
We start from the `if' part. 
Law-invariance, monotonicity, positive homogeneity and normalization follow from 
Proposition~\ref{prop:orl}. 
Since $\Phi$ is strictly increasing on $\{\Phi>0\}$, we have $\Phi(0)<1$, so, together 
with $\Phi(+\infty)=+\infty$, positivity follows from item b) of 
Lemma~\ref{lem:positivity}. 
The CxLS property follows from Lemma~\ref{lem:cxls}, weak lower semicontinuity 
follows from item b) of Lemma~\ref{lem:psi-lsc}, and continuity from above 
follows from item c) of Lemma~\ref{lem:psi-lsc}. 
\smallskip

To prove the `only if' part, note that for $Y \in L^\infty$ it holds that 
$e^{Y} \in L^\infty_{++}$, so by positivity $\rho(e^{Y})>0$ and we can define 
$\trho \colon L^\infty \to \R$ by $\trho(Y) := \log \rho(e^{Y})$, and 
$\Theta(Y) := \trho(-Y) = \log \rho(e^{-Y}).$
\smallskip

It is standard to check that $\Theta$ is a monetary 
risk measure in the sense of \cite{FS16}. 
The acceptance set of $\Theta$ at the level of distributions is
\[
\N_\Theta := \{ G \in \M_{1,c}(\R) \mid \Theta(G) \leq 0 \}.
\]
For any distribution $F \in \M_{1,c}((0,+\infty))$, define its 
\emph{log-reflection} $\hat{F} \in \M_{1,c}(\R)$ by 
$\hat{F}(t) := 1 - F(e^{-t}\!-)$, i.e., $\hat{F}$ is the distribution 
of $-\log X$ when $X \sim F$. This is well defined on 
$\M_{1,c}((0,+\infty))$, corresponding to $X \in L^\infty_{++}$. 
Writing $Y := -\log X \sim \hat F$, we have $\rho(X) = \rho(e^{-Y})$, hence, as pointed out in \cite{BLR18},
\[
F \in B_\rho 
\iff \rho(F) \leq 1 
\iff \log \rho(e^{-Y}) \leq 0 
\iff  \Theta(\hat{F}) \leq 0 
\iff \hat{F} \in \N_\Theta.
\]
It is easy to verify that the map $\Gamma \colon F \mapsto \hat{F}$ is an affine bijection 
from $\M_{1,c}((0,+\infty))$ to $\M_{1,c}(\R)$. Since $B_\rho$, $B_\rho^c$ are convex with 
respect to mixtures by Lemma \ref{lem:mconv}, their images 
$\N_\Theta = \Gamma(B_\rho)$ and $\N_\Theta^c = \Gamma(B_\rho^c)$ 
are convex with respect to mixtures.
Moreover, $\rho$ is weakly lower semicontinuous, so by Lemma~\ref{lem:psi-lsc} 
 $B_\rho$ is $\psi$-weakly closed; by item~(e) of Lemma~4 
in \cite{BLR18}, $\N_\Theta$ is $\psi$-weakly closed as well.
\smallskip

To prove condition (3.1) of \cite{W06}, we need $x_0 \in \R$ with 
$\delta_{x_0} \in \N_\Theta$ such that for every $y$ with 
$\delta_y \in \N_\Theta^c$ there exists $\alpha > 0$ with 
$(1-\alpha)\delta_{x_0} + \alpha \delta_y \in \N_\Theta$. Equivalently, 
via $\Gamma^{-1}$ and writing $c_0 := e^{-x_0}$, $w := e^{-y}$,  
there exists $c_0 \in (0,1)$ such that for every $w > 1$ there is 
$\alpha > 0$ with $(1-\alpha)\delta_{c_0} + \alpha\delta_{w} \in B_\rho$.
Fix any $c_0 \in (0,1)$; then $\delta_{x_0} \in \N_\Theta$, as 
$\Theta(x_0) = -x_0 < 0$. Let $w > 1$ be arbitrary. Since the underlying 
probability space is non-atomic, choose a nested family of events 
$A_\alpha \downarrow \emptyset$ with $P(A_\alpha) = \alpha$, and set 
\[
Z_\alpha := c_0\,\mathbf{1}_{A_\alpha^c} + w\,\mathbf{1}_{A_\alpha}, 
\qquad \alpha \in (0,1).
\]
Then $Z_\alpha \sim (1-\alpha)\delta_{c_0} + \alpha\delta_{w}$, and, since 
the $A_\alpha$ are nested, $Z_\alpha \downarrow c_0$ almost surely as 
$\alpha \downarrow 0$. By continuity from above and positive homogeneity 
of $\rho$,
\[
\rho\big((1-\alpha)\delta_{c_0} + \alpha\delta_{w}\big) 
= \rho(Z_\alpha) \;\longrightarrow\; \rho(c_0) = c_0 < 1 
\qquad (\alpha \downarrow 0).
\]
Hence, there exists $\bar\alpha > 0$ with 
$\rho\big((1-\bar\alpha)\delta_{c_0} + \bar\alpha\delta_{w}\big) < 1$, i.e.\ 
$(1-\bar\alpha)\delta_{c_0} + \bar\alpha\delta_{w} \in B_\rho$. Since $c_0$ 
does not depend on $w$, condition (3.1) holds.
\smallskip

Since $\N_\Theta$ and $\N_\Theta^c$ are convex with respect to mixtures, 
$\N_\Theta$ is $\psi$-weakly closed, and condition (3.1) holds, all the 
hypotheses of Theorem~3.1 of~\cite{W06} are satisfied. Hence there exist a nondecreasing, left-continuous 
function $\ell \colon \R \to \R$ and $z \in \R$ in the interior of the convex 
hull of the range of $\ell$ such that
\[
\N_\Theta = \left\{ G \in \M_{1,c}(\R) \;\middle|\; 
\int \ell(-t) \, dG(t) \leq z \right\}.
\]
Adding a common constant to $\ell$ and $z$, we may assume $z = 1$. 
Recalling that $G = \hat F$ is the law of $-\log X$ for $X \sim F \in 
\M_{1,c}((0,+\infty))$, the substitution $t=-\log x$ gives 
$\int \ell(-t)\,d\hat F(t) = \E[\ell(-(-\log X))] = \E[\ell(\log X)]$, so that, 
\emph{for every $X \in L^\infty_{++}$},
\begin{equation}\label{eq:acceptance-X}
\rho(X) \leq 1 \;\iff\; \E[\ell(\log X)] \leq 1.
\end{equation}
Applying \eqref{eq:acceptance-X} to the deterministic $X = 1/k$, $k>0$, and 
using $\rho(1/k) = 1/k$ by normalization and positive homogeneity, we get 
$\ell(-\log k) \leq 1 \iff k \geq 1$, which forces $\ell(0) = 1$.
\smallskip

Define $\Phi \colon (0,+\infty) \to \R$ by $\Phi(x) := \ell(\log x)$, and extend 
to $x=0$ by $\Phi(0) := \lim_{x\to 0^+}\Phi(x) = \ell(-\infty)$; in particular 
$\Phi$ is right-continuous at $0$. We now check that $\Phi$ satisfies 
Definition~\ref{def:orl}.

\emph{a)} For $x \leq 1$ we have $\log x \leq 0$, so $\Phi(x) = \ell(\log x) \leq 
\ell(0) = 1$ by monotonicity. For $x>1$, $\log x > 0$, and the strict monotonicity 
of $\ell$ to the right of $0$ (established below from continuity from above) gives 
$\Phi(x) = \ell(\log x) > \ell(0) = 1$. In particular, $\Phi(1) = \ell(0) = 1$.

\emph{b)} $\Phi$ is nondecreasing, being the composition of  $\log$ and $\ell$.

\emph{c)} $\Phi$ is left-continuous at each $x_0 > 0$ by left-continuity of $\ell$. 
\medskip 
\noindent
For $X \in L^\infty_{++}$ and $k > 0$, 
from~\eqref{eq:acceptance-X}, it follows that
\begin{align*}
\rho(X/k) \leq 1 
&\;\iff\; \E[\ell(\log(X/k))] \leq 1 \\
&\;\iff\; \E[\Phi(X/k)] \leq 1.
\end{align*}
By the positive homogeneity of $\rho$, 
$\rho(X/k) \leq 1 \iff k \geq \rho(X)$, hence
\[
\rho(X) = \inf\{k > 0 : \E[\Phi(X/k)] \leq 1\} = H_\Phi(X), 
\]
so $\rho$ coincides with a suitable generalized Orlicz premium on $L^\infty_{++}$. 
We now derive the additional properties of $\Phi$ from the assumptions on 
$\rho$.

\smallskip
\noindent\emph{a) $\Phi(0)$ is finite and $\Phi(+\infty)=+\infty$.}
We first exclude $\Phi(0)=\ell(-\infty)=-\infty$. Suppose by contradiction 
that $\ell(-\infty)=-\infty$. Take $X\in L^\infty_+$ with $P(X=0)>0$ and 
$P(X>0)>0$, and set $X_\epsilon:=X\vee\epsilon\in L^\infty_{++}$ for 
$\epsilon>0$. Fix any $k>0$. On $\{X=0\}$ we have 
$\Phi(X_\epsilon/k)=\Phi(\epsilon/k)\to\Phi(0)=-\infty$ as $\epsilon\downarrow0$, 
while on $\{X>0\}$ the integrand is bounded below by $\Phi(0)$; since 
$P(X=0)>0$, it follows that $\E[\Phi(X_\epsilon/k)]\to-\infty$, and in 
particular $\E[\Phi(X_\epsilon/k)]\le 1$ for $\epsilon$ small, so
$H_\Phi(X_\epsilon)\le k$. As $k>0$ is arbitrary, $H_\Phi(X_\epsilon)\to0$. 
Using $\rho(X_\epsilon)=H_\Phi(X_\epsilon)$ on $L^\infty_{++}$ and 
$\rho(X)\le\rho(X_\epsilon)$ by monotonicity, letting $\epsilon\downarrow0$ 
gives $\rho(X)=0$, contradicting positivity. Hence 
$\Phi(0)$ is finite, and $\Phi(0)\le\Phi(1)=1$.

Since $\rho$ is continuous from above, $\rho\ne\esssup$. As 
$\rho=H_\Phi$ on $L^\infty_{++}$, this rules out $\Phi(0)=1$ by item~a) of 
Lemma~\ref{lem:positivity}, so $\Phi(0)<1$. Finally, item~c) of 
Lemma~\ref{lem:positivity} excludes the non-positive case $\Phi(0)<1$, 
$\Phi(+\infty)<+\infty$; since $\rho$ is positive, we conclude 
$\Phi(+\infty)=+\infty$.

\smallskip
\noindent\emph{b) $\Phi$ is nonnegative.} 
From a), $\Phi(0)\in[0,1)$ after possibly replacing $\Phi$ by an affine 
transform: if $\Phi(0)<0$, set 
\[
\widetilde\Phi := 1 + \frac{1}{1-\Phi(0)}\,\bigl(\Phi-1\bigr).
\]
Then $\widetilde\Phi$ is a nondecreasing, left-continuous function with 
$\widetilde\Phi(1)=1$, $\widetilde\Phi(0)=0$, $\widetilde\Phi(+\infty)=+\infty$, 
and, since $c:=1/(1-\Phi(0))>0$, 
$\E[\widetilde\Phi(X/k)]\le1 \iff \E[\Phi(X/k)]\le1$ by 
Lemma~\ref{lem:equiv-orl}, so that $H_{\widetilde\Phi}=H_\Phi$. Replacing 
$\Phi$ by $\widetilde\Phi$, we may thus assume $\Phi(0)\in[0,1)$ and 
$\Phi\ge 0$ on $[0,+\infty)$.

\smallskip
\noindent\emph{c) $\Phi$ is strictly increasing on $\{\Phi>0\}$.} 
By contradiction, if $\Phi$ is constant on an interval $[a,b]$ 
with $0<a<b$, by the necessity part of item~c) of Lemma~\ref{lem:psi-lsc}, 
$H_\Phi$ is not continuous from above; since $\rho=H_\Phi$ on 
$L^\infty_{++}$, this contradicts the continuity from above of $\rho$.

Summing up, $\Phi$ 
is a generalized Orlicz function in the sense of Definition~\ref{def:orl}, 
right-continuous at $0$, strictly increasing on $\{\Phi>0\}$, and with 
$\Phi(+\infty)=+\infty$.
\smallskip

We end the proof by showing the extension to the whole $L^\infty_+$. Let $X \in L^\infty_+$ with $P(X = 0) > 0$. If $X = 0$ a.s., then 
$\rho(X) = 0 = H_\Phi(X)$. Otherwise $P(X>0)>0$; set $X_\epsilon := X \vee 
\epsilon$ for $\epsilon > 0$. Then $X_\epsilon \in L^\infty_{++}$, so by the 
previous case $\rho(X_\epsilon) = H_\Phi(X_\epsilon)$, and $X_\epsilon 
\downarrow X$ a.s.\ as $\epsilon \downarrow 0$.
Both functionals are continuous from above: $\rho$ by hypothesis, and $H_\Phi$ 
by item~c) of Lemma~\ref{lem:psi-lsc}, whose assumptions hold since $\Phi$ is 
right-continuous at $0$, strictly increasing on $\{\Phi>0\}$ and satisfies 
$\Phi(+\infty)=+\infty$. Hence, passing to the limit $\epsilon \downarrow 0$ in 
the identity $\rho(X_\epsilon) = H_\Phi(X_\epsilon)$,
\[
\rho(X) \;=\; \lim_{\epsilon \downarrow 0} \rho(X_\epsilon) 
\;=\; \lim_{\epsilon \downarrow 0} H_\Phi(X_\epsilon) 
\;=\; H_\Phi(X).
\]
Therefore, $\rho = H_\Phi$ on all of $L^\infty_+$, which completes the proof.
\end{proof} 

\section*{Acknowledgements}
We are very grateful to 
Marco Frittelli, 
Emanuela Rosazza Gianin, Hans Schumacher, Mitja Stadje, 
and to seminar and conference participants at 
the University of Vienna, 
the University of Amsterdam,
the University of Ulm,
Heriott-Watt University,
the Amsterdam-Leuven-London (ALL) workshop in Amsterdam, 
the 11th General AMAMEF Conference in Bielefeld,
the 2024 FADeRiS Workshop in Ulm,
the 27th IME Congress in Chicago and 
the 2024 Probability Rome Conference
for their comments and suggestions.

\setstretch{0.7}

{}


\begin{thebibliography}{99}

\bibitem{A66} \textsc{Acz\'el, J.} (1966). 
Lectures on Functional Equations and Their Applications. 
Academic Press, New York. 




\bibitem{BB15} \textsc{Bellini, F., Bignozzi, V.} (2015). 
On elicitable risk measures. 
\textit{Quantitative Finance} 15(5), 725-733.


\bibitem{BRG08} \textsc{Bellini, F., Rosazza Gianin, E.} (2008).
On Haezendonck risk measures.
\textit{Journal of Banking and Finance} 32(6), 986–994.

\bibitem{B12} \textsc{Bellini, F.} (2012).
Isotonicity properties of generalized quantiles.
\textit{Statistics \& Probability Letters} 82, 2017-2024.


\bibitem{BRG12} \textsc{Bellini, F., Rosazza Gianin, E.} (2012).
Haezendonck–Goovaerts risk measures and Orlicz quantiles.
\textit{Insurance: Mathematics and Economics} 51, 107-114.

\bibitem{BKMR14} \textsc{Bellini, F., Klar, B., Müller, A., Rosazza Gianin, E.} (2014).
Generalized quantiles as risk measures.
\textit{Insurance: Mathematics and Economics} 54, 41-48.

\bibitem{BLR18} \textsc{Bellini, F., Laeven, R.~J.~A., Rosazza Gianin, E.} (2018). 
Robust return risk measures. 
\textit{Mathematics and Financial Economics} 12(1), 5-32.

\bibitem{BLR21} \textsc{Bellini, F., Laeven, R.~J.~A., Rosazza Gianin, E.} (2021). 
Dynamic robust Orlicz premia and Haezendonck-Goovaerts risk measures. 
\textit{European Journal of Operational Research} 291(2), 438-446.


\bibitem{BT86} \textsc{Ben-Tal, A., Teboulle, M.} (1986).
Expected utility, penalty functions, and duality in stochastic nonlinear programming.
\textit{Management Science} 32(11), 1445-1466.

\bibitem{BT07} \textsc{Ben-Tal, A., Teboulle, M.} (2007).
An old-new concept of convex risk measures: {T}he optimized certainty equivalent.
\textit{Mathematical Finance} 17(3), 449-476.

\bibitem{BF08} \textsc{Biagini, S., Frittelli, M.} (2008). 
A unified framework for utility maximization problems: {A}n Orlicz space approach. 
\textit{The Annals of Applied Probability} 18(3), 929-966.


\bibitem{C96} \textsc{Chen, Z.} (1996).
Conditional $L^p$-quantiles and their application to testing of symmetry in nonparametric regression. 
\textit{Statistics \& Probability Letters} 29, 107-115.


\bibitem{CL08} \textsc{Cheridito, P., Li, T.} (2008). 
Dual characterization of properties of risk measures on Orlicz hearts. 
\textit{Mathematics and Financial Economics} 2(29), 29-55.

\bibitem{CL09} \textsc{Cheridito, P., Li, T.} (2009). 
Risk measures on Orlicz Hearts. 
\textit{Mathematical Finance} 19(2), 189-214.






\bibitem{DBBZ14} \textsc{Delbaen, F., Bellini, F., Bignozzi, V., Ziegel, J.~F.} (2016).
Risk measures with the CxLS property. 
\textit{Finance and Stochastics} 20(2), 433-453.



\bibitem{ES92} \textsc{Edgar, G.~A., Sucheston, L.} (1992). 
Stopping Times and Directed Processes. 
Cambridge University Press, Cambridge.


\bibitem{EMWW21} \textsc{Embrechts, P., Mao, T., Wang, Q., Wang, R.} (2021). Bayes risk, elicitability, and the Expected Shortfall. \textit{Mathematical Finance} 31, 1190-1217.

\bibitem{FZ16} \textsc{Fissler, T., Ziegel, J.~F.} (2016). 
Higher order elicitability and Osband's principle. 
\textit{The Annals of Statistics} 44(4), 1680-1707.


\bibitem{FS16} \textsc{F\"ollmer, H., Schied, A.} (2016).
Stochastic Finance.
Fourth edition, De Gruyter, Berlin.


\bibitem{FS48}
\textsc{Friedman, M., Savage, L.J.} (1948). 
Utility analysis of choices involving risk. 
\textit{Journal of Political Economy} 56(4), 279-304.




\bibitem{G11} \textsc{Gneiting, T.} (2011). 
Making and evaluating point forecasts. 
\textit{Journal of the American Statistical Association} 106(494), 746-762.

\bibitem{GDH84} \textsc{Goovaerts, M.~J., De Vylder, F.~E.~C., Haezendonck, J.} (1984). 
Insurance Premiums. 
North-Holland Publishing, Amsterdam.

\bibitem{GKDT04} \textsc{Goovaerts, M.~J., Kaas, R., Dhaene, J., Tang, Q.} (2004). 
Some new classes of consistent risk measures. 
\textit{Insurance: Mathematics and Economics} 34, 505-516.

\bibitem{HH19} \textsc{Harjulehto, P., H\"{a}st\"{o}, P.} (2019). 
Orlicz Spaces and Generalized Orlicz Spaces. 
Springer, Berlin.

\bibitem{HG82} \textsc{Haezendonck, J., Goovaerts, M.~J.} (1982).
A new premium calculation principle based on Orlicz norms.
\textit{Insurance: Mathematics and Economics} 1(1), 41-53.

\bibitem{KT79} \textsc{Kahneman, D., Tversky, A.} (1979)
Prospect theory: An analysis of decision under risk.
\textit{Econometrica} 47(2), 263-291.




\bibitem{LR22} \textsc{Laeven, R.~J.~A., Rosazza Gianin, E.} (2022). 
Quasi-logconvex measures of risk.
Available at: \url{https://arxiv.org/abs/2208.07694}.

\bibitem{LRZ24} \textsc{Laeven, R.~J.~A., Rosazza Gianin, E., Zullino, M.} (2024). 
Law-invariant return and star-shaped risk measures.
\textit{Insurance: Mathematics and Economics} 117, 140-153.

\bibitem{LRZ23} \textsc{Laeven, R.~J.~A., Rosazza Gianin, E., Zullino, M.} (2026). 
Star-shaped and dynamic return risk measures via BSDEs.
\textit{Finance and Stochastics} 30, 903-950.



\bibitem{LS14} \textsc{Laeven, R.~J.~A., Stadje, M.} (2014). 
Robust portfolio choice and indifference valuation. 
\textit{Mathematics of Operations Research} 39(4), 1109-1141.


\bibitem{MR22} \textsc{Moresco, M. R., Righi, M. B.} (2022). 
On the link between monetary and star-shaped risk measures. \textit{Statistics \& Probability Letters} 184, 109345.

\bibitem{M83} \textsc{Musielak, J.} (1983). 
Orlicz Spaces and Modular Spaces. 
Lecture Notes in Mathematics, Springer-Verlag, Berlin. 

\bibitem{NP87} \textsc{Newey, W.~K., Powell, J.~L.} (1987).
Asymmetric least squares estimation and testing. 
\textit{Econometrica} 55(4), 819-847. 

\bibitem{N00} \textsc{Niculescu, C.~P.} (2000). 
Convexity according to the geometric mean. 
\textit{Mathematical Inequalities \& Applications} 3(2), 155-167.

\bibitem{NP04} \textsc{Niculescu, C.~P., Persson, L.-E.} (2004). 
Convex Functions and Their Applications. 
Springer, New York.


\bibitem{O85} \textsc{Osband, K.~H.} (1985). 
Providing Incentives for Better Cost Forecasting. 
PhD Thesis, University of California, Berkeley.



\bibitem{RR91} \textsc{Rao, M.~M., Ren, Z.~D.} (1991). 
Theory of Orlicz Spaces. 
Marcel Dekker, New York.




\bibitem {S25} \textsc{Strati, F.} (2025). Risk measures on Musielak-Orlicz spaces: A state-dependent perspective for insurance. \textit{Insurance: Mathematics and Economics} 125, 103174.





\bibitem{W06} \textsc{Weber, S.} (2006).
Distribution-invariant risk measures, information, and dynamic consistency.
\textit{Mathematical Finance} 16(2), 419-441.



\bibitem{Z16} \textsc{Ziegel, J.~F.} (2016). 
Coherence and elicitability. 
\textit{Mathematical Finance} 26(4), 901-918.
\end{thebibliography}
\end{document}